\documentclass[11.5pt]{scrartcl}

\usepackage{xy-format}
\usepackage{xy-theorem}
\usepackage{fancyhdr}
\usepackage{amsmath}

\newcommand{\R}{\mathbb{R}}
\newcommand{\C}{\mathbb{C}}
\newcommand{\what}[1]{\widehat{#1}}
\newcommand{\dd}{\partial}
\newcommand{\sd}{\not\!\partial}
\newcommand{\tf}{\mathrm{tf}}
\newcommand{\gtf}{\gamma\mathrm{tf}}
\newcommand{\Pp}{P_+^{\perp}}
\newcommand{\Pm}{P_-^{\perp}}
\newcommand{\Stf}{S^2_0(T^*\dd X)}
\DeclareMathOperator{\img}{im}
\DeclareMathOperator{\rank}{rank}
\DeclareMathOperator{\Reop}{Re}
\DeclareMathOperator{\diag}{diag}

\title{A Linearized Obstruction to the Supersymmetric Extension of Conformal Boundary Conditions in Euclidean Gravity}

\author[1]{Xingyang Yu}
\affil[1]{Department of Physics, Virginia Tech, Blacksburg, VA 24061, USA}

\date{}

\begin{document}
\maketitle

\begin{abstract}
The conformal boundary condition for gravity fixes the boundary conformal class and the mean
curvature, leaving the trace-free extrinsic curvature free as the conjugate response. It is the
boundary-value form of York's conformal decomposition of gravitational data
\cite{York:1972sj,York:1973ia}, shown to be well posed for Einstein metrics by Anderson
\cite{Anderson:2006lqb}. Witten identified it as the elliptic replacement for the ill-posed
Dirichlet condition in the finite-boundary perturbative Euclidean gravitational path integral
\cite{Witten:2018lgb}. We show that this perturbative construction admits no
half-supersymmetric extension in linearized minimal supergravity. For
fixed conformal bosonic data, no half-dimensional gravitino boundary condition (local or
pseudodifferential, APS-type included, with any compatible ghost condition at highest-derivative
order) closes the full preserved chiral supersymmetry. Supersymmetry first selects the natural
local chiral gravitino datum. Acting back on this datum then produces the trace-free extrinsic
curvature, precisely the response that the conformal prescription leaves unfixed. The
obstruction is therefore not the failure of a particular elliptic ansatz: even the chiral/Robin
completion that is LS-elliptic and BRST-compatible at highest-derivative order would impose
Dirichlet control on a Neumann response. The obstruction is pointwise in tangential momentum and
survives compensating gauge transformations. It is a linearized, highest-derivative obstruction,
not a global or nonlinear no-go; nonlinear supercovariant boundary terms may evade it by tying
the trace-free extrinsic curvature to gravitino bilinears.
\end{abstract}

\paragraph{AI Use Statement.}
The numerical verification scripts listed in the Reproducibility note, together with part of
the manuscript preparation, were carried out with substantial assistance from AI tools,
including Anthropic's Claude Code and OpenAI's Codex. The author directed the work, reviewed
and edited the manuscript, checked the analytic arguments, scripts, citations, and conclusions,
and takes full responsibility for the final content.

\clearpage
\tableofcontents

\section{Introduction}

In Euclidean quantum gravity with a finite boundary, the choice of boundary data is part of
the definition of the perturbative problem. Around a classical saddle, the gauge-fixed
graviton and ghost fluctuations should form a Fredholm elliptic boundary-value problem; this
is what underlies the heat-kernel expansion, the one-loop determinant, and a controlled mode
expansion \cite{Vassilevich:2003xt}. The standard Dirichlet choice, fixing the induced metric $h_{ab}$ as for a scalar
field, does not have this property. Failure of the Lopatinski--Shapiro (LS) condition
\cite{Lopatinskii:1953,Shapiro:1953,BaerBallmann:2012} means that the principal boundary symbol has nonzero modes localized near the boundary at
arbitrarily large tangential momentum. Witten showed that the gauge-fixed linearized Einstein
operator with the induced metric fixed has precisely this failure \cite{Witten:2018lgb}, so
the perturbative expansion around a saddle is not well posed with Dirichlet metric data.

The elliptic replacement is to fix less data. One fixes only the conformal class $[h_{ab}]$
of the boundary metric and the mean curvature $K$, leaving the traceless extrinsic curvature
$K_{ab}^{\tf}$ free as the conjugate response. This is the boundary-value version of York's
conformal decomposition of gravitational data
\cite{York:1972sj,York:1973ia}. Following Witten, we call these \emph{conformal} boundary
data. Apart from standard names such as Brown--York \cite{Brown:1992br} and
Gibbons--Hawking--York \cite{Gibbons:1976ue,York:1972sj}, ``York''
below refers to the underlying conformal/mean-curvature polarization: $[h_{ab}]$ and $K$
are fixed as sources, while
$K_{ab}^{\tf}$ is left as the conjugate momentum. This conformal boundary condition is
elliptic (its well-posedness for Einstein metrics was established by Anderson
\cite{Anderson:2006lqb}) and it is the natural boundary data for the gravitational path
integral; recent work has developed these conformal boundary conditions considerably,
including a one-parameter family of well-posed variants \cite{Liu:2024ymn} and their
finite-boundary thermodynamics \cite{Anninos:2023epi,Banihashemi:2025qqi}. Throughout,
Greek indices $\mu,\nu$ are bulk indices, Latin indices $a,b$ are tangent to $\dd X$, and
$\perp$ denotes the normal direction. On flat half-space $\R^4_+=\{x^\perp\ge0\}$, in
de~Donder gauge, the principal linearized conformal boundary rows are
\begin{equation}
\label{eq:york}
h_{ab}^{\tf}\big|=0,\qquad
\mathcal K(h)\big|:=\dd_\perp\tau-2\dd^a h_{\perp a}=0,\qquad
T_\mu(h)\big|=0,
\end{equation}
with $h_{ab}=h_{ab}^{\tf}+\tfrac13\delta_{ab}\tau$, $\tau=h^a{}_a$, $h=h^\rho{}_\rho$, and
$T_\mu(h)=\dd^\nu h_{\mu\nu}-\tfrac12\dd_\mu h$ the de~Donder gauge function. The feature
that matters below is the last part of the prescription: $K_{ab}^{\tf}$, the momentum
conjugate to $[h_{ab}]$, is left \emph{free}, as a Neumann response, the quasilocal
boundary stress-tensor response to varying the metric, in the Brown--York sense
\cite{Brown:1992br}.

We now ask whether this perturbative boundary problem can be made compatible with supersymmetry.
For BPS observables and localization \cite{Pestun:2007rz}, or supersymmetric holographic renormalization
\cite{BenettiGenolini:2016tsn}, it is natural to require the path integral to preserve half
of the supersymmetry. Throughout, ``half'' means the \emph{full} preserved chiral space
$\img\Pp$; the weaker question of preserving a single fixed complex supercharge is not
addressed here. The boundary problem then has to satisfy three requirements at once. It must
keep the elliptic conformal bosonic data \eqref{eq:york}; it must be compatible with the
BRST gauge fixing \cite{Becchi:1975nq,Tyutin:1975qk} of diffeomorphisms and local supersymmetry; and it must impose a gravitino
boundary condition closed under the preserved supersymmetry. We call such a completion a
\emph{super-York} boundary condition: the name marks a supersymmetric version of the York
polarization underlying the conformal condition. Although the bosonic data are conformal,
we do not call the completion ``superconformal'': in boundary conformal field theory,
\emph{superconformal boundary conditions} already denote conditions preserving part of the
superconformal algebra \cite{Gaiotto:2008sa}. In this paper, ``conformal'' refers instead to
fixing the conformal \emph{class} of the boundary metric.

This question is part of a long Euclidean boundary-condition literature. Gauge-invariant
boundary operators for Euclidean gravity, including the de~Donder boundary operator with
projection and nilpotent pieces, were studied by Avramidi--Esposito--Kamenshchik
\cite{Avramidi:1996ae}. Local gravitino boundary conditions and their one-loop consequences
were analyzed by Esposito \cite{Esposito:1996kv,Esposito:1995xy,Esposito:1997wt}, in parallel with the spectral conditions of
D'Eath--Esposito \cite{DEath:1991opx}. Moss and collaborators emphasized the tension between BRST-invariant
boundary conditions \cite{Moss:1996ip} and strong ellipticity in quantum gravity, including
extrinsic-curvature boundary data \cite{Moss:2013vh}. For supergravity itself,
van~Nieuwenhuizen--Vassilevich derived orbits of boundary conditions closed under local
symmetries and BRST transformations, and showed that purely local Gibbons--Hawking-type
\cite{Gibbons:1976ue} conditions break local supersymmetry \cite{vanNieuwenhuizen:2005kg}. Our statement is
narrower than these global analyses: it is a local high-momentum obstruction for fixed
conformal bosonic data. The novelty is not the broad fact that familiar local
supergravity boundary conditions can fail to preserve local supersymmetry; it is the
identification, for the conformal boundary problem, of the gauge-invariant obstruction with
$K_{ab}^{\tf}$, the Neumann response conjugate to the boundary conformal class.

We show that Witten's elliptic conformal boundary problem for perturbative Euclidean gravity has
no separated linear half-supersymmetric completion in minimal supergravity. The mechanism is
simple. Supersymmetry ties the boundary metric to the gravitino,
$\delta_\varepsilon h_{ab}\sim\bar\varepsilon\,\gamma_{(a}\psi_{b)}$; acting once more, it
ties the gravitino back to a derivative of the metric, namely to the extrinsic curvature.
Concretely, the supersymmetry
variation of the natural chiral gravitino datum is the boundary normal--tangential spin
connection $\omega_a{}^{b\perp}$, whose gauge-invariant content is exactly the traceless
extrinsic curvature $K_{ab}^{\tf}$. But $K_{ab}^{\tf}$ is precisely what \eqref{eq:york}
leaves free. Supersymmetry wants to fix it (to treat $K_{ab}^{\tf}$ as Dirichlet data tied
to the gravitino) while the conformal condition insists on keeping it as a Neumann
response. A coordinate $[h_{ab}]$ and its conjugate momentum $K_{ab}^{\tf}$ cannot both be
held fixed; this is the Dirichlet--Neumann mismatch behind the obstruction.

Technically, the proof uses the standard local high-momentum test for elliptic boundary
conditions. We replace the boundary by its tangent flat half-space, evaluate all coefficients
at the boundary point, discard lower-derivative terms, Fourier transform along the boundary,
and check the resulting finite-dimensional linear algebra at each nonzero tangential momentum.
In the PDE literature this finite-dimensional object is the principal boundary symbol. We
then ask, for \emph{any} choice $V(k)$ of the half of the gravitino boundary data left
free, local, or of Atiyah--Patodi--Singer (APS)/Calderon type, meaning nonlocal spectral
projectors \cite{Atiyah:1975jf,Calderon:1963,Seeley:1966,BaerBallmann:2012}, whether it can close the algebraic half-supersymmetry
variation while keeping \eqref{eq:york} fixed. Theorem~\ref{thm:main} gives a negative answer.
The two natural possibilities fail for different reasons. The
spectral (APS) condition, which is natural and self-adjoint for Dirac-type problems and was
advocated by D'Eath and Esposito for the one-loop gravitino \cite{DEath:1991opx}, breaks the
boundary chirality required by supersymmetry. The local chiral condition passes this necessary
conformal-class test and satisfies the ellipticity and BRST checks described below, but its
reverse supersymmetry variation is precisely the unfixed $K_{ab}^{\tf}$ response.

This does not contradict the existence of supersymmetric supergravity boundaries. The pair
$(K_{ab}^{\tf},\Pm\psi_a)$ is the bottom component of a boundary extrinsic-curvature
multiplet \cite{Belyaev:2010as}, the boundary analogue of the bulk supercurrent multiplet,
in which the stress tensor and the supercurrent sit together \cite{Komargodski:2010rb}.
In the Lorentzian, fully nonlinear theory
Belyaev--van~Nieuwenhuizen and Belyaev--Pugh
\cite{Belyaev:2005rt,Belyaev:2007bg,Belyaev:2008ex,Belyaev:2010as} build
half-supersymmetric boundaries precisely by supercovariantizing the spin connection,
$\what{\omega}_a{}^{b\perp}=\omega_a{}^{b\perp}+\psi\psi$, and adding a boundary action
($\oint K$ for gravity, $\oint\bar\psi\gamma\psi$ for the gravitino) that ties
$K_{ab}^{\tf}$ to a gravitino bilinear. Our result is the linearized shadow that explains
why such nonlinear terms are the natural way to evade the present obstruction: in the local
high-momentum boundary test, $K_{ab}^{\tf}$ is free
and the Dirichlet--Neumann mismatch obstructs closure. The statement is a linearized, flat-half-space,
highest-derivative statement for the gauge-fixed theory; being pointwise
in the tangential momentum $k\neq0$ it covers pseudodifferential (APS-type) conditions, but
it is not a global or nonlinear claim. The ancillary scripts reproduce the finite-dimensional
linear algebra in several gamma representations and momenta; the result-to-script map and the
reproduction command are collected in the Reproducibility note at the end of
Section~\ref{sec:disc}.

\paragraph{Organization of the paper.}
Section~\ref{sec:bosonic-york} rederives Witten's local half-space test for Dirichlet and
conformal boundary data, emphasizing the polarization in which $K_{ab}^{\tf}$ is left
free. Section~\ref{sec:sugra-dictionary}
collects the boundary-supergravity dictionary needed for the argument. Section~\ref{sec:nogo}
contains the local no-go proof: it sets up the linearized flat-half-space model, proves that
the forward variation selects the chiral gravitino datum, records the chiral/Robin completion
showing that the obstruction is not caused by a simple ellipticity or BRST-counting failure,
identifies the reverse variation with $K_{ab}^{\tf}$, and states the main theorem.
Section~\ref{sec:otherd} comments on other dimensions.
Section~\ref{sec:disc} then discusses the physical meaning and outlook.

\section{Bosonic Conformal Boundary Data}
\label{sec:bosonic-york}

We begin by redoing the local calculation behind Witten's conformal boundary condition
\cite{Witten:2018lgb}, in the notation used later. This is the one bosonic fact on which the
supergravity argument rests. The conformal boundary condition is the alternative to Dirichlet
metric data that passes the local high-momentum ellipticity test, and it does so precisely by
leaving $K_{ab}^{\tf}$ unfixed. In the perturbative path-integral motivation, this test is the
local criterion for the Gaussian fluctuation integral around a saddle to be well defined.

\paragraph{The local LS test.}
For a second-order Laplace-type operator on a half-space, the relevant LS modes are plane
waves that oscillate along the boundary and decay into the bulk. Thus take
\[
X=\R^4_+=\{x^\perp\ge0\},\qquad \dd X=\R^3,
\]
Fourier transform along the boundary, and write a decaying graviton mode as
\begin{equation}
\label{eq:witten-mode}
h_{\mu\nu}=H_{\mu\nu}\,
\exp\!\left(i k_a y^a-|k|x^\perp\right),\qquad k\neq0 .
\end{equation}
At this principal-symbol level
\[
\dd_a\mapsto ik_a,\qquad \dd_\perp\mapsto-|k|.
\]
The gauge-fixed linearized Einstein operator is Laplace-type in de~Donder gauge, so every
constant symmetric tensor $H_{\mu\nu}$ gives such a decaying solution of the principal
interior equation. Ellipticity of a boundary condition says that no nonzero decaying solution
of this form should obey all homogeneous boundary rows.

\paragraph{Dirichlet metric data fail.}
The gravitational analogue of scalar Dirichlet data would be
\[
h_{ab}\big|=0,\qquad T_\mu(h)\big|=0,
\]
where the de~Donder rows are imposed because the gauge-fixed problem must also be compatible
with the diffeomorphism ghosts. Put
\[
H_{ab}=0,\qquad q_a:=H_{\perp a},\qquad \beta:=H_{\perp\perp}.
\]
Since the trace is then $H=\beta$, the boundary-symbol equations $T_\perp=0$ and $T_a=0$
become
\begin{align}
\label{eq:dirichlet-symbol}
ik^a q_a-\frac12 |k|\beta&=0,\\
-|k|q_a-\frac{i}{2}k_a\beta&=0 .
\end{align}
They have a nonzero solution for every $\beta$:
\begin{equation}
\label{eq:dirichlet-badmode}
q_a=-\frac{i}{2|k|}k_a\beta .
\end{equation}
Thus the full Dirichlet metric condition admits a boundary-localized decaying graviton mode
for every nonzero tangential momentum. Since $|k|$ can be made arbitrarily large, this is the
LS failure: the boundary-value problem has too many short-distance modes concentrated near
the boundary. This is Witten's local reason why the Dirichlet metric condition is not
elliptic. The same phenomenon can be interpreted geometrically as
``boundary-moving'' diffeomorphisms: a normal displacement of the boundary can preserve the
induced metric to first order while still producing nontrivial bulk metric fluctuations.

\paragraph{Conformal data pass.}
The conformal boundary condition weakens the metric row and replaces the missing scalar
condition by fixing the trace of the extrinsic curvature. In York variables,
\[
h_{ab}=e^{2\varphi}\hat h_{ab},\qquad \det\hat h=1,\qquad
K_{ab}=K_{ab}^{\tf}+\frac13h_{ab}K .
\]
Infinitesimally, fixing the boundary conformal class means fixing only
$h_{ab}^{\tf}$; fixing $K$ gives, at highest-derivative order,
\[
2\,\delta K=\dd_\perp\tau-2\dd^a h_{\perp a},\qquad
\tau=h^a{}_a .
\]
Thus the local boundary rows are precisely those displayed in \eqref{eq:york}:
\[
h_{ab}^{\tf}\big|=0,\qquad
\mathcal K(h)\big|=0,\qquad
T_\mu(h)\big|=0 .
\]
To check ellipticity, write the tangential part of the amplitude as a pure trace,
\[
H_{ab}=\delta_{ab}r,\qquad \tau=3r,
\]
and keep the same notation $q_a=H_{\perp a}$ and $\beta=H_{\perp\perp}$. The mean-curvature
row gives
\begin{equation}
\label{eq:york-krow}
2ik^a q_a+3|k|r=0 .
\end{equation}
The de~Donder rows are
\begin{align}
\label{eq:york-tperp}
ik^a q_a-\frac12 |k|\beta+\frac32 |k|r&=0,\\
\label{eq:york-ta}
-|k|q_a-\frac{i}{2}k_a\beta-\frac{i}{2}k_a r&=0 .
\end{align}
Comparing \eqref{eq:york-krow} with \eqref{eq:york-tperp} first gives $\beta=0$. Then
\eqref{eq:york-ta} gives
\[
q_a=-\frac{i}{2|k|}k_a r .
\]
Substituting this back into \eqref{eq:york-krow} gives
\[
4|k|r=0 .
\]
Since $k\neq0$, we get $r=0$, hence $q_a=0$ and $\beta=0$. Therefore the only decaying
solution satisfying the conformal boundary rows is the zero solution. This is the local
symbol calculation behind the statement that the conformal boundary condition is elliptic.

\paragraph{The polarization.}
The same calculation identifies which gravitational momentum has been left free. At linear
order, with the same choice of normal coordinate,
\begin{equation}
\label{eq:lin-extrinsic}
2\,\delta K_{ab}
=\dd_\perp h_{ab}-\dd_a h_{\perp b}-\dd_b h_{\perp a},\qquad
2\,\delta K=\dd_\perp\tau-2\dd^a h_{\perp a}.
\end{equation}
The mean-curvature row fixes the trace $\delta K$. The rows $h_{ab}^{\tf}=0$ fix the coordinate
part of the boundary conformal metric. But no boundary row fixes the normal derivative of
$h_{ab}^{\tf}$, or equivalently the trace-free part of $\delta K_{ab}$ modulo the tangential
shift terms in \eqref{eq:lin-extrinsic}. This is not a missing condition; it is the intended
choice of boundary polarization:
\[
\text{conformal data fix }[h_{ab}]\text{ and }K,\qquad
\text{but leave }K_{ab}^{\tf}\text{ free.}
\]
In the Brown--York variation of the gravitational action, the response conjugate to a
traceless variation of the boundary metric is built from precisely this
$K_{ab}^{\tf}$ component \cite{Brown:1992br}. A Dirichlet boundary, a brane boundary, or an
Israel junction condition \cite{Israel:1966rt} may impose a relation on $K_{ab}^{\tf}$; that is a different
ensemble. The supergravity question in this paper is sharper: can one preserve boundary
supersymmetry while keeping the York polarization underlying Witten's conformal condition, in which
$K_{ab}^{\tf}$ remains a Neumann response?

\section{A Minimal Boundary-Supergravity Dictionary}
\label{sec:sugra-dictionary}

This section collects the minimal supergravity language needed for the no-go. We do not
review all formulations of Euclidean supergravity with boundaries; we only record the
boundary facts used in the symbol calculation. The forward and reverse supersymmetry
variations are previewed here heuristically; both are established rigorously, as part of the
proof, in Section~\ref{sec:nogo}.

\paragraph{Bulk fields and principal supersymmetry.}
We expand around flat space with vanishing background gravitino. At the linearized,
highest-derivative level the fields are the metric perturbation and the Rarita--Schwinger
field \cite{Rarita:1941mf},
\[
(h_{\mu\nu},\psi_\mu).
\]
Auxiliary fields of old- or new-minimal supergravity, and lower-order curvature or
cosmological-constant terms, do not enter the principal boundary symbol considered here. The
rigid supersymmetry transformations, in the conventions of \cite{VanNieuwenhuizen:1981ae}, are
\begin{equation}
\label{eq:sugra-dictionary-susy}
\delta_\varepsilon h_{\mu\nu}
=\bar\varepsilon\,\gamma_{(\mu}\psi_{\nu)},\qquad
\delta_\varepsilon\psi_\mu
=\frac14\,\omega_\mu{}^{mn}(h)\gamma_{mn}\varepsilon ,
\end{equation}
where
\[
\omega_\mu{}^{mn}(h)=\frac12(\dd^n h_\mu{}^m-\dd^m h_\mu{}^n),\qquad
\gamma_{mn}:=\frac12[\gamma_m,\gamma_n].
\]
These two transformation laws already contain the obstruction. The first maps a boundary
metric condition to a condition on the tangential gravitino. The second maps a gravitino
condition back to a normal derivative of the metric, hence to extrinsic curvature.

\paragraph{Normal chirality.}
A boundary gives a preferred Clifford element, the normal gamma matrix. We write
\[
P_\pm^\perp=\frac12(1\pm\gamma^\perp).
\]
This is not four-dimensional chirality; it is chirality with respect to the boundary normal.
Tangential gamma matrices flip it:
\begin{equation}
\label{eq:dictionary-flip}
\gamma_aP_+^\perp=P_-^\perp\gamma_a,\qquad
\gamma_aP_-^\perp=P_+^\perp\gamma_a .
\end{equation}
We choose the preserved half-supersymmetry to obey
\[
P_+^\perp\varepsilon=\varepsilon .
\]
Changing this choice only exchanges the two signs everywhere.

\paragraph{A spin-1/2 warm-up: rotation invariance fixes the projector.}
Before the gravitino, consider what already fixes the projector for a spin-1/2 field. A local
half-rank condition $P\psi|=0$ selects a rank-two projector $P$ on the spinor index, and there
are many such $P$. The physical constraint is boundary rotation invariance,\footnote{The author
is grateful to Edward Witten for raising the spin-1/2 question that motivates this warm-up and
for emphasizing that a boundary projector must be rotation invariant.} that $P$ commute with the
tangential rotation generators,
\begin{equation}
\label{eq:rot-inv}
[P,\gamma_{ab}]=0 .
\end{equation}
In four dimensions the elements of the Clifford algebra commuting with all tangential
$\gamma_{ab}$ are spanned by
\begin{equation}
\label{eq:commutant}
\{\,1,\ \gamma^\perp,\ \gamma_5,\ \gamma_5\gamma^\perp\,\},
\end{equation}
so the rotation-invariant local projectors are built from the normal element $\gamma^\perp$, up
to a chiral-angle rotation by $\gamma_5$. Three natural candidates result,
\begin{equation}
\label{eq:three-projectors}
\underbrace{\tfrac12(1\pm\gamma^\perp)}_{\text{bag/MIT}},\qquad
\underbrace{\tfrac12(1\pm\gamma_5)}_{\text{4d chirality}},\qquad
\underbrace{\tfrac12(1\pm A)}_{\text{APS}},\qquad
A=i\gamma^\perp\gamma^a\hat k_a ,
\end{equation}
classified by rotation invariance, LS-ellipticity, and locality:
\begin{center}
\begin{tabular}{lccc}
\toprule
local half-rank projector & rotation inv. & LS-elliptic & local \\
\midrule
$\tfrac12(1\pm\gamma^\perp)$ \quad(bag/MIT) & \checkmark & \checkmark & \checkmark \\
$\tfrac12(1\pm\gamma_5)$ \quad(4d chirality) & \checkmark & $\times$ & \checkmark \\
$\tfrac12(1\pm A)$ \quad(APS, spectral) & \checkmark & \checkmark & $\times$ \\
\bottomrule
\end{tabular}
\end{center}
Only the bag projector passes all three. The bag/MIT projector
$P_\pm^\perp=\tfrac12(1\pm\gamma^\perp)$, the normal-chirality condition $\gamma^\perp\psi|=\pm\psi$
of the preceding paragraph (the boundary form of the MIT bag condition), is local and LS-elliptic. The four-dimensional
chirality projector $\tfrac12(1\pm\gamma_5)$ is local but not elliptic: it commutes with $A$, so
the space of decaying boundary modes splits by chirality and one component is left
unconstrained, a zero mode at every tangential momentum. The APS spectral projector
$\tfrac12(1\pm A)$ is elliptic but nonlocal \cite{Atiyah:1975jf}. This is the spinor form of the
local-versus-spectral dichotomy for Euclidean fermion boundary conditions
\cite{DEath:1991opx,Luckock:1990xr}. What the rest of this section shows is that the gravitino
inherits the same normal projector $P_-^\perp$, and that supersymmetry leaves no freedom in the
choice: it is not one rotation-invariant option among several but is forced by closure of the
conformal class.

\paragraph{Why the tangential gravitino must be chiral.}
The conformal metric row is $h_{ab}^{\tf}|=0$. Its supersymmetry variation is
\[
\delta_\varepsilon h_{ab}^{\tf}
=\left(\bar\varepsilon\,\gamma_{(a}\psi_{b)}\right)^{\tf}.
\]
Because $\bar\varepsilon=\bar\varepsilon P_+^\perp$ in the complexified convention used
below, and because of \eqref{eq:dictionary-flip}, this expression sees the
$P_-^\perp$ part of the tangential gravitino. The gamma-trace contribution to $\psi_a$ gives
only a trace in $ab$, so it drops out of the trace-free conformal row. Thus the local way to
keep the conformal class fixed under all preserved supersymmetries is
\begin{equation}
\label{eq:dictionary-chiral-bc}
P_-^\perp\psi_a\big|=0 ,
\end{equation}
leaving the $P_+^\perp$ part of $\psi_a$ as the unconstrained tangential gravitino datum. This
is the forward direction of the argument. The APS/spectral projector is natural for a
Dirac-type operator, but it projects using the tangential Dirac symbol
$i\gamma^\perp\gamma^a\hat k_a$; that operator mixes the two normal chiralities and therefore
does not impose \eqref{eq:dictionary-chiral-bc}.

\paragraph{What the reverse variation measures.}
Now vary the chiral row \eqref{eq:dictionary-chiral-bc}. Splitting the spin connection in
\eqref{eq:sugra-dictionary-susy} into tangential--tangential and normal--tangential pieces,
\[
\delta_\varepsilon\psi_a
=\frac14\,\omega_a{}^{bc}\gamma_{bc}\varepsilon
+\frac12\,\omega_a{}^{b\perp}\gamma_{b\perp}\varepsilon .
\]
The first term contains two tangential gamma matrices and preserves normal chirality, so it
is killed by $P_-^\perp$ when $\varepsilon\in\img P_+^\perp$. The second term contains one
tangential gamma matrix and reaches the opposite normal chirality. Therefore
\begin{equation}
\label{eq:dictionary-reverse}
\delta_\varepsilon\!\left(P_-^\perp\psi_a\big|\right)
=\frac12\,\omega_a{}^{b\perp}\big|\,\gamma_b\varepsilon .
\end{equation}
At the same linearized highest-derivative level,
\begin{equation}
\label{eq:dictionary-omega-k}
\omega_a{}^{b\perp}
=\frac12(\dd_\perp h_a{}^b-\dd^b h_{a\perp})
=K_a{}^b+\frac12\dd_a h^b{}_\perp ,
\end{equation}
with the convention for $K_{ab}$ in \eqref{eq:lin-extrinsic}. The second term in
\eqref{eq:dictionary-omega-k} is the tangential derivative of the shift component
$h_{a\perp}$; in the symbol calculation it is a compensating local-supersymmetry gauge image.
After quotienting this gauge image and after the mean-curvature row has fixed the trace $K$, the
gauge-invariant content of \eqref{eq:dictionary-reverse} is exactly
$K_{ab}^{\tf}$.

\paragraph{Boundary multiplet intuition.}
Thus, modulo gauge and trace pieces, the preserved boundary supersymmetry pairs
\[
P_-^\perp\psi_a
\qquad\longleftrightarrow\qquad
K_{ab}^{\tf}.
\]
This gives the geometric content of the theorem. The bosonic conformal ensemble deliberately leaves
$K_{ab}^{\tf}$ free as the Neumann response conjugate to the boundary conformal class. But the
fermionic chiral row required by the forward variation has a reverse variation proportional
to that same $K_{ab}^{\tf}$. Preserving supersymmetry would therefore require the boundary
condition to control a variable that the conformal polarization is designed not to fix.

\paragraph{Relation to familiar Dirichlet supergravity boundaries.}
A standard half-BPS supergravity boundary usually fixes the induced boundary
frame, imposes one normal-chirality condition on the tangential gravitino, and adds the
Gibbons--Hawking--York term together with its fermionic partner
\cite{Belyaev:2005rt,Belyaev:2007bg}. In that setting the
extrinsic-curvature response is part of the boundary stress-tensor/supercurrent multiplet
controlled by the boundary action. That is a natural supersymmetric Dirichlet-type
ensemble, but it is not Witten's conformal ensemble: here the full boundary metric is not fixed,
and $K_{ab}^{\tf}$ is meant to remain the response conjugate to the conformal class. The
calculation below asks whether one can keep that bosonic polarization and still impose a
linear half-supersymmetric gravitino boundary condition.

For this reason, the theorem should not be read as a no-go for all half-BPS supergravity
boundaries: the nonlinear, supercovariant boundary constructions of
Belyaev--van~Nieuwenhuizen and Belyaev--Pugh
\cite{Belyaev:2005rt,Belyaev:2007bg,Belyaev:2010as} (which supercovariantize the
normal--tangential connection and add boundary action terms) may evade the linear
obstruction; we return to this in Section~\ref{sec:disc}. The result below concerns the purely
linear attempt to keep Witten's bosonic conformal polarization and add an independent local
half-supersymmetric gravitino boundary condition.

\section{The Local No-Go Argument}
\label{sec:nogo}

We now prove the local obstruction advertised above. The point of the argument is not to
classify supersymmetric boundary conditions for Euclidean supergravity. It is to test the
particular perturbative ensemble singled out by Witten's conformal bosonic data: the boundary
conformal class is fixed, the trace of the extrinsic curvature supplies the missing scalar
condition, and the traceless extrinsic curvature $K_{ab}^{\tf}$ remains the response variable.
We therefore work at the level of the boundary principal symbol, where ellipticity and
high-momentum zero modes are decided.

The proof has two directions. First, preservation of the bosonic conformal rows by a boundary
supersymmetry forces the tangential gravitino datum to be chiral. Second, reversing the
supersymmetry variation shows that no half-dimensional local fermionic datum can avoid the
normal--tangential spin connection, whose gauge-invariant core is precisely
$K_{ab}^{\tf}$. Thus the same datum which must remain free in Witten's bosonic polarization is
forced back into the boundary condition by linear supersymmetry.

\subsection{Setup}
\label{sec:setup}

\paragraph{Model.} We work on the flat Euclidean half-space
\begin{equation}
X=\R^4_+=\{x^\perp\ge0\},\qquad
\dd X=\R^3,\qquad
g_{\mu\nu}^{(0)}=\delta_{\mu\nu},\qquad
\psi^{(0)}_\mu=0 .
\end{equation}
The complexified linearized fields are
\begin{equation}
(h_{\mu\nu},\psi_\mu),\qquad
h_{\mu\nu}=h_{\nu\mu},
\end{equation}
with $\psi_\mu$ a vector--spinor. Thus the theorem is a statement about the complexified
boundary symbol; a choice of Euclidean reality condition or integration contour is an
additional structure. No off-shell auxiliary fields (old- or new-minimal) are included; they
would add only lower-order, non-principal terms to $\delta_\varepsilon\psi_\mu$, and any
boundary data they carry lies outside the theorem's highest-derivative scope.

Euclidean gamma matrices are Hermitian and obey
\begin{equation}
\{\gamma^\mu,\gamma^\nu\}=2\delta^{\mu\nu},\qquad
\gamma^\perp=\gamma^4,\qquad
\gamma^a\quad(a=1,2,3),\qquad
P_\pm^{\perp}=\frac12(1\pm\gamma^\perp).
\end{equation}
Here $\gamma^\perp$ is the boundary-normal gamma matrix and the $\gamma^a$ are tangential.
Multi-index gamma matrices are antisymmetrized Clifford products; in particular
\begin{equation}
\gamma_{mn}:=\frac12[\gamma_m,\gamma_n].
\end{equation}

\paragraph{Symmetries.} The linearized gauge symmetries are diffeomorphisms and local
supersymmetry:
\begin{equation}
\delta_\xi h_{\mu\nu}=\dd_\mu\xi_\nu+\dd_\nu\xi_\mu,\qquad
\delta_\zeta\psi_\mu=\dd_\mu\zeta .
\end{equation}
The rigid supersymmetry we attempt to preserve acts as
\begin{equation}
\label{eq:susy}
\delta_\varepsilon h_{\mu\nu}=\bar\varepsilon\,\gamma_{(\mu}\psi_{\nu)},
\qquad
\delta_\varepsilon\psi_\mu=\tfrac14\,\omega_\mu{}^{mn}(h)\,\gamma_{mn}\,\varepsilon,
\end{equation}
with constant $\varepsilon$ obeying the half-projection
\begin{equation}
\Pp\varepsilon=\varepsilon .
\end{equation}
The linearized spin connection is
\begin{equation}
\omega_\mu{}^{mn}(h)=\frac12(\dd^n h_\mu{}^m-\dd^m h_\mu{}^n).
\end{equation}
The frame indices $m,n$ run over all four directions, tangential and normal; the full
connection, not a tangential truncation, is used throughout. For the symbol calculation we use
the Hermitian Dirac pairing
\begin{equation}
\bar\varepsilon=\varepsilon^\dagger,\qquad
\Pp\varepsilon=\varepsilon\ \Longrightarrow\ \bar\varepsilon=\bar\varepsilon\Pp ,
\end{equation}
where the implication uses the Hermiticity of $\Pp$. If one instead
packages Euclidean supergravity spinors by a charge-conjugation or symplectic-Majorana
doubling, the induced left projector is exchanged with $\Pm$; this conjugates the notation
$\Pp\leftrightarrow\Pm$ but leaves the rank and obstruction statements unchanged. No
positivity or real-structure property is used below. Thus ``preserving half-supersymmetry''
means preserving the two-complex-dimensional algebraic subspace
\begin{equation}
\img\Pp,\qquad \dim_\C\img\Pp=2,
\end{equation}
of the complexified symbol problem. Selecting a real Euclidean contour, or imposing a doubled
symplectic-Majorana reality condition before continuation, may change which representative
of the pair $(\Pp,\Pm)$ is called preserved, but it does not change the finite-dimensional
rank obstruction.

\paragraph{Gauge-fixed operators and ellipticity.} In de~Donder gauge the linearized
Einstein operator is Laplace-type. The gauge-fixed Rarita--Schwinger operator is denoted
$F$; at principal-symbol level it is Dirac-type \cite{VanNieuwenhuizen:1981ae},
\begin{equation}
F_\mu{}^\nu\sim i\sd\,\delta_\mu{}^\nu,\qquad
\sd:=\gamma^\rho\dd_\rho .
\end{equation}
The minimal ghost operators have principal parts
\begin{equation}
c_\mu:\ -\dd^2,\qquad
\eta:\ L\sim\sd ,
\end{equation}
where $c_\mu$ is the anticommuting diffeomorphism ghost and $\eta$ is the commuting-spinor
local-supersymmetry ghost. Since the metric equation is second order and the gravitino
equation is first order, the boundary problem has mixed differential order (a
Douglis--Nirenberg system in PDE terminology \cite{DouglisNirenberg:1955,AgmonDouglisNirenberg:1959}). The bosonic and fermionic orders are counted
with weights
\begin{equation}
w_{\mathrm{bos}}=2,\qquad
w_{\mathrm{ferm}}=1,\qquad
\sd^2=-\dd^2 .
\end{equation}

The Lopatinski--Shapiro (LS) test is the local ultraviolet test for such a boundary-value
problem. One first works at a boundary point in coordinates where the boundary looks like the
flat half-space, then replaces all slowly varying coefficients by their values at that point,
and finally drops lower-derivative terms. Thus, for LS purposes, the question becomes a
constant-coefficient problem on $\R^4_+$. One Fourier transforms along the boundary while the
normal direction remains a half-line problem:
\begin{equation}
\dd_a\mapsto ik_a,\qquad
u(x^\perp,y)=u_0\,e^{ik\cdot y-\lambda x^\perp},\qquad
k\neq0,\quad \Reop\lambda>0 .
\end{equation}
The LS question is whether the highest-derivative part of the boundary operator is an
isomorphism on this decaying-solution space \cite{BaerBallmann:2012}. This finite-dimensional
map is what we call the principal boundary symbol.

For the Dirac-type $F$, the decaying space is the $+1$ eigenspace of
\begin{equation}
A(\hat k)=i\gamma^\perp\gamma^a\hat k_a,\qquad
\hat k=\frac{k}{|k|},\qquad
A^2=\mathbf 1 .
\end{equation}
The corresponding APS spectral projectors are
\begin{equation}
\Pi_\pm=\frac12(\mathbf 1\pm A).
\end{equation}

The BRST statements below are likewise statements about this highest-derivative part of the
gauge fields and minimal ghosts $(c_\mu,\eta)$ only. We do not construct a full quantum BRST boundary complex:
antighosts, Nakanishi--Lautrup or other auxiliary fields, lower-order gauge-fixing terms,
and global restrictions on boundary-compatible gauge parameters are outside the theorem.
When we quotient by a compensating local supersymmetry in the no-go proof, we use the full
highest-derivative gauge variation, explicitly an enlarged quotient.

\paragraph{Boundary data.} We decompose the gravitino trace into normal and tangential
parts, and then split the tangential vector--spinor into its gamma-traceless and gamma-trace
pieces:
\begin{equation}
\psi_\mu\to(\psi_\perp,\psi_a),\qquad
\psi_a=\psi_a^{\gtf}+\frac13\gamma_a\chi,\qquad
\chi=\gamma^a\psi_a,\qquad
\gamma^a\psi_a^{\gtf}=0 .
\end{equation}
The bosonic conformal data and the trace-free tensor space used below are
\begin{equation}
B_{\mathrm{bos}}=\{\eqref{eq:york}\},\qquad
\Stf:=S^2_0(T^*\dd X).
\end{equation}

\paragraph{Symbol convention.} At fixed nonzero tangential momentum $k$, let $E_\psi$ be the
full vector--spinor trace space,
\begin{equation}
E_\psi=\C^{16}_{(\psi_\perp,\psi_1,\psi_2,\psi_3)} .
\end{equation}
After the local high-momentum reduction, a boundary condition is represented by a matrix
$b(k):E_\psi\to W(k)$, with $W(k)$ the boundary data space for the imposed rows; the boundary
condition is
\begin{equation}
b(k)\psi|=0 .
\end{equation}
In this paper
\[
V(k):=\ker b(k)\subset E_\psi
\]
is the \emph{allowed} or \emph{kept} boundary datum, physically, the half of the gravitino
boundary phase space the condition leaves free. Thus ``half-dimensional'' means
\begin{equation}
\dim_\C V(k)=8,\qquad \dim_\C E_\psi/V(k)=8 .
\end{equation}
This is the gravitino analogue of fixing half the components, as a Dirichlet/Neumann split
does for a bosonic field.
Equivalently, $V(k)$ is a point of the middle Grassmannian $\mathrm{Gr}(8,16)$ of the boundary
vector--spinor space. We do not require it to be isotropic (Lagrangian) for the boundary pairing
of the Dirac-type operator \cite{BaerBallmann:2012}, i.e.\ we do not assume the boundary
condition defines a self-adjoint realization; the no-go therefore covers a strictly larger class
than the self-adjoint/Lagrangian boundary conditions.
This convention is opposite to descriptions that call the image of $b(k)$ the killed data.
Local differential boundary conditions give polynomial or homogeneous dependence on $k$, while
spectral APS/Calderon-type conditions give nonlocal projectors depending on the direction
$\hat k$; the no-go below is pointwise in $k$, so it allows arbitrary half-dimensional $V(k)$, including
non-componentwise vector--spinor symbols and mixing between $\psi_\perp$ and the $\psi_a$.

Two notions of ``supersymmetric boundary condition'' will recur, and it is worth fixing them
at the outset. By a \emph{local supersymmetric conformal completion} we mean a half-dimensional
boundary condition $B_{\mathrm{ferm}}$ on the gravitino trace space $\psi_\mu$, supplemented by
compatible ghost conditions on $\eta$, all given by finite-order differential boundary operators, such that
\begin{equation}
(B_{\mathrm{bos}},B_{\mathrm{ferm}})
\quad\text{is LS-elliptic and BRST-compatible at principal-symbol level.}
\end{equation}
In other words, $B_{\mathrm{ferm}}$ is a gravitino condition admissible alongside the
conformal data and the gauge fixing. And we say a boundary condition $B$ is \emph{preserved by
the rigid half-supersymmetry} if
\begin{equation}
\delta_\varepsilon B|\in\operatorname{span}(B)+(\text{compensating gauge}),
\qquad
\Pp\varepsilon=\varepsilon,\quad \varepsilon\in\img\Pp .
\end{equation}
That is, the boundary-condition surface is $\delta_\varepsilon$-invariant modulo
diffeomorphism and local-supersymmetry transformations. (At this linearization about
$\psi^{(0)}=0$ local Lorentz acts trivially on
$(h,\psi)$, rotating only the frame, so the compensating quotients used below are the
boundary-diffeomorphism and local-supersymmetry images.)

We do not analyze the weaker problem of preserving a single fixed complex supercharge
$\varepsilon_0\in\img\Pp$. The matrix obstruction below uses the full half-space
$\img\Pp$.

\subsection{The forward direction forces the chiral projector}
\label{sec:forward}

Supersymmetry acts on the bosonic datum first, so we start there: which gravitino boundary
data is even consistent with holding the conformal class fixed under a supersymmetry
variation? The answer is rigid and follows from the Clifford algebra at the boundary: a
tangential gamma matrix anticommutes with the normal one, and therefore exchanges the two
boundary chiralities. Only the local chiral half of the gravitino survives this necessary
test; the self-adjoint spectral (APS) choice does not. The fully general half-dimensional symbol,
with boundary diffeomorphisms quotienting the forward variation, is treated in the
completion-independence argument below.

Only the conformal-class component $h_{ab}^{\tf}|=0$ is used in the forward analysis. The
mean-curvature and de~Donder rows in \eqref{eq:york} involve normal/tangential derivatives of
$h$, so their supersymmetry variations depend on derivatives of $\psi$ and on the detailed
mixed boundary complex. We do not claim that the chiral condition alone closes those rows.
For the theorem this necessary row is enough: forward closure of $h_{ab}^{\tf}|=0$ supplies
the constraint on the allowed gravitino data used below, and the reverse fermionic variation
then supplies the contradiction; no separate closure claim is made here for the mean-curvature
or de~Donder rows.

The claim is that exact closure of $\delta_\varepsilon h_{ab}^{\tf}|=0$, for a componentwise
half-dimensional spinor boundary datum, requires the kept tangential gravitino datum to lie in
$\img(\Pp)$: the local chiral projector achieves it exactly, while the spectral APS projector
does not. (Allowing closure only modulo boundary diffeomorphisms enlarges the forward kernel to
$F_{12}$, treated in the completion-independence argument below.) The mechanism is the boundary
Clifford relation: a tangential $\gamma_a$ anticommutes with $\gamma^\perp$, hence
\begin{equation}
\label{eq:flip}
\gamma_a\Pp=\Pm\gamma_a,\qquad \Pp\gamma_a=\gamma_a\Pm,\qquad
\Pp\gamma_a\Pp=\gamma_a\Pm\Pp=0,\qquad \Pp\gamma_a\Pm=\gamma_a\Pm.
\end{equation}
From \eqref{eq:susy},
\begin{equation}
\delta_\varepsilon h_{ab}
=\frac12\bar\varepsilon(\gamma_a\psi_b+\gamma_b\psi_a),
\qquad
\psi_b=\psi_b^{\gtf}+\frac13\gamma_b\chi .
\end{equation}
The $\chi$ contribution is pure trace:
\begin{equation}
\frac16\bar\varepsilon(\gamma_a\gamma_b+\gamma_b\gamma_a)\chi
=\frac13\delta_{ab}\bar\varepsilon\chi ,
\end{equation}
using $\{\gamma_a,\gamma_b\}=2\delta_{ab}$. Since
$\bar\varepsilon\gamma^a\psi_a^{\gtf}=0$, the trace-free part is
\begin{equation}
\delta_\varepsilon h_{ab}^{\tf}=\tfrac12\bar\varepsilon\big(\gamma_a\psi_b^{\gtf}+\gamma_b\psi_a^{\gtf}\big),
\end{equation}
automatically traceless and depending only on $\psi_a^{\gtf}$.

Let the gravitino condition keep the half-dimensional subspace $V\subset\C^4$ in the spinor index
(so the kept boundary datum is valued in $V$), and let $\varepsilon$ range over $\img\Pp$, so
$\bar\varepsilon=\bar\varepsilon\Pp$. Then exact vanishing of
$\delta_\varepsilon h_{ab}^{\tf}|$ for all kept data and all such $\varepsilon$ is equivalent to
\begin{equation}
\Pp\gamma_a v=0
\qquad
\text{for every }v\in V\text{ and every }a .
\end{equation}
By \eqref{eq:flip},
\begin{equation}
\Pp\gamma_a v=\gamma_a\Pm v .
\end{equation}
Since each $\gamma_a$ is invertible, this vanishes for all $a$ iff $\Pm v=0$, so
\begin{equation}
V\subseteq\img\Pp .
\end{equation}
The chiral choice $V=\img\Pp$ satisfies this identically since $\Pp\gamma_a\Pp=0$.

For the ordinary spinor APS symbol,
\begin{equation}
V=\img\Pi_+,\qquad
\Pi_+=\frac12(\mathbf1+A),\qquad
A=i\gamma^\perp\gamma^a\hat k_a .
\end{equation}
Because $A$ contains one $\gamma^\perp$ and one tangential factor,
\begin{equation}
A\gamma^\perp=-\gamma^\perp A .
\end{equation}
Thus $A$ interchanges $\img\Pp$ and $\img\Pm$; its $+1$ eigenspace therefore meets both and is
contained in neither, so $\img\Pi_+\not\subseteq\img\Pp$. Hence APS fails forward closure
(equivalently $\Pp\gamma_a\Pi_+\neq0$; the explicit norm $0.35$--$0.50$ is checked in
the ancillary scripts listed in the Reproducibility note).

\paragraph{The obstruction survives any half-dimensional datum.}
The forward test above used a componentwise datum; we now show the obstruction is
\emph{completion-independent}. Fix $k\neq0$. There is no half-dimensional $V\subset\C^{16}$ (kept
boundary gravitino datum) that closes both the forward variation of $h_{ab}^{\tf}|$ and the
reverse variation of the fermionic condition. Because the argument quantifies over arbitrary
$V$, it covers local and pseudodifferential (APS-type) completions alike. Concretely, a half-dimensional
boundary condition is the vanishing of a rank-$8$ boundary symbol,
\begin{equation}
b(k)\psi|=0,\qquad b(k):\C^{16}\to W(k),\qquad V=\ker b(k).
\end{equation}
Here $b(k)$ may be any local differential or pseudodifferential symbol, an APS/Calder\'on
projector, or one mixing $\psi_\perp$ with the $\psi_a$. Such a condition is preserved by the
rigid supersymmetry modulo a gauge image $G$ iff
\begin{equation}
b(k)\,\delta_\varepsilon\psi\in b(k)\,G
\qquad
\Longleftrightarrow
\qquad
\delta_\varepsilon\psi\in V+G
\end{equation}
for every allowed bosonic datum and preserved $\varepsilon$. This is a condition on the
subspace $V$ alone, identical for every choice of $b(k)$. The argument has
three pieces. Forward closure forces the tangential projection of $V$ into
a kernel $F_{12}$, defined in \eqref{eq:fwd}. Reverse closure, even after quotienting by the full
local-supersymmetry gauge image
\begin{equation}
G_\zeta:=\{ik_a\zeta:\zeta\in\C^4\},
\end{equation}
forces the
$K^{\tf}$ obstruction image $R_{K^{\tf}}$ into $V+G_\zeta$. Since the
$K^{\tf}$ obstruction is purely tangential, the two requirements imply
$R_{K^{\tf}}\subset F_{12}+G_\zeta$, and the explicit matrix below contradicts this.

Split $\C^{16}=\C^{12}_{\psi_a}\oplus\C^4_{\psi_\perp}$ and write
$V_{\mathrm{tang}}=\mathrm{proj}_{\mathrm{tang}}(V)$. The bosonic variation
$\delta_\varepsilon h_{ab}^{\tf}$ depends only on $\psi_a$ (as found above), so
forward closure modulo a compensating boundary diffeomorphism is
$\delta_\varepsilon h^{\tf}\in\img\mathrm{CK}(k)$, i.e.
\begin{equation}
\label{eq:fwd}
V_{\mathrm{tang}}\subseteq F_{12}:=\ker\!\big(\delta_\varepsilon h^{\tf}\bmod\img\mathrm{CK}(k)\big),
\end{equation}
where $\mathrm{CK}(k):\xi_a\mapsto(ik_{(a}\xi_{b)})^{\tf}$ is the conformal-Killing operator,
of rank $3$ in the $5$-dimensional $\Stf$ (an elementary check: $k_{(a}\xi_{b)}^{\tf}=0$ forces
$\xi\parallel k$ and then $\xi=0$).

The reverse variation $\delta_\varepsilon(\text{fermionic BC})$ contains the broken-connection
term whose gauge-invariant part spans $R_{K^{\tf}}\subset\C^{12}_{\psi_a}$ (identified with
$K_{ab}^{\tf}$ in Section~\ref{sec:reverse}); closure modulo the local-supersymmetry gauge
$G_\zeta=\{ik_a\zeta:\zeta\in\C^4\}$ requires $R_{K^{\tf}}\subseteq V+G_\zeta$. Here
$G_\zeta$ is deliberately the \emph{full} tangential local-supersymmetry gauge image, not the
smaller image obeying any particular ghost boundary condition such as $\Pm\eta|=0$ in
Section~\ref{sec:completion}. Quotienting by the full image makes the obstruction easier to
remove; failure after this enlarged quotient implies failure for every restricted
compensator allowed by a BRST-compatible highest-derivative boundary condition. As both $R_{K^{\tf}}$ and
$G_\zeta$ are purely tangential, $R_{K^{\tf}}\subseteq V_{\mathrm{tang}}+G_\zeta$; combined with
\eqref{eq:fwd},
\begin{equation}
\label{eq:joint}
R_{K^{\tf}}\subseteq F_{12}+G_\zeta.
\end{equation}
All three subspaces $F_{12}$, $G_\zeta$, and $R_{K^{\tf}}$ live in the tangential
$\C^{12}_{\psi_a}$, so \eqref{eq:joint} is a single inclusion of subspaces of $\C^{12}_{\psi_a}$.
``Closure for all preserved $\varepsilon$'' means each inclusion must hold simultaneously over a
basis of the two-dimensional preserved spinor space $\img\Pp$; the obstruction is therefore the
map stacked over those two basis spinors, which is why the matrix $M_{\tf}$ below has
$2\times2=4$ rows, the two transverse-traceless metric components for each of the two basis
$\varepsilon$. We now check \eqref{eq:joint} explicitly. By boundary $SO(3)$ covariance take
$\hat k=e_1$. For the four-dimensional spinor space $S$, write
$S=S_+\oplus S_-$ with $S_\pm=\img P_\pm^\perp$ and choose a boundary-chiral basis
\[
\gamma^\perp=\begin{pmatrix}1&0\\0&-1\end{pmatrix},\qquad
\gamma_a=\begin{pmatrix}0&\sigma_a\\ \sigma_a&0\end{pmatrix},
\]
with Pauli matrices $\sigma_a$. Let $\phi_a=\Pm\psi_a\in S_-$ and let
$u_1,u_2$ be the standard basis of $S_+$. The conformal-Killing image for $k=e_1$ is spanned
inside $\Stf$ by $(2,-1,-1)$, the $12$ component, and the $13$ component; hence the quotient
$\Stf/\img\mathrm{CK}$ is represented by the two transverse-traceless components
\[
q_1=\delta h_{22}-\delta h_{33},\qquad q_2=2\delta h_{23}.
\]
For a tangential gravitino datum $\phi=(\phi_1,\phi_2,\phi_3)$ and
$\varepsilon'\in S_+$,
\begin{equation}
\label{eq:ttres}
q_1(\varepsilon',\phi)=\varepsilon'^\dagger(\sigma_2\phi_2-\sigma_3\phi_3),\qquad
q_2(\varepsilon',\phi)=\varepsilon'^\dagger(\sigma_2\phi_3+\sigma_3\phi_2).
\end{equation}
As $\varepsilon'$ ranges over the two-dimensional $S_+$, these two TT components define the
quotient residual map $\C^{12}_{\psi_a}\to(\Stf/\img\mathrm{CK})\otimes S_+^*$; its rank is
$2$, so $\dim F_{12}=10$.
The local-supersymmetry gauge image $G_\zeta=\{i\delta_{a1}\zeta\}$ changes only $\phi_1$,
so it has already disappeared from \eqref{eq:ttres}. The reverse $K^{\tf}$ image is
\[
\phi_a=K_{ab}\sigma_b\varepsilon,\qquad K_{ab}=K_{ba},\qquad K_a{}^a=0,
\]
up to an irrelevant overall factor. Use the traceless basis
\[
D_1=\diag(1,-1,0),\quad D_2=\diag(1,1,-2),\quad E_{12},\quad E_{13},\quad E_{23}
\]
for $K^{\tf}$, where $E_{ij}$ denotes the symmetric off-diagonal matrix with unit
$ij$ and $ji$ entries. The columns below are ordered as $Bu_1,Bu_2$, with $B$ running through
this basis in the displayed order. With row order
$(q_1(u_1),q_1(u_2),q_2(u_1),q_2(u_2))$, \eqref{eq:ttres} gives
\begin{equation}
\label{eq:rk-matrix}
M_{\tf}=
\begin{pmatrix}
-1&0&3&0&-i&0&0&-1&0&2i\\
0&-1&0&3&0&i&1&0&2i&0\\
0&i&0&-3i&0&1&-i&0&2&0\\
i&0&-3i&0&-1&0&0&i&0&2
\end{pmatrix}.
\end{equation}
The last row is $-i$ times the first, the third row is $-i$ times the second, and the first
two rows are independent; hence $\rank M_{\tf}=2$. Equivalently, two independent combinations of
the $K_{ab}^{\tf}$ boundary data have no preimage in $F_{12}+G_\zeta$, so the projection of
$R_{K^{\tf}}$ to the quotient by $F_{12}+G_\zeta$ is nonzero (rank two), contradicting
\eqref{eq:joint}. Since any nonzero $k$ is rotated to $e_1$, the contradiction is pointwise
for every $k\neq0$. For reference, the dimensions entering this computation (at $k=e_1$) are
\[
\begin{array}{lcl}
\text{half-dimensional kept datum } V\subset\C^{16} & : & 8,\\
\text{forward kernel } F_{12}\subset\C^{12}_{\psi_a} & : & 10,\\
\text{local-supersymmetry gauge } G_\zeta\subset\C^{12}_{\psi_a} & : & 4,\\
\big(\Stf,\ \img\mathrm{CK},\ \Stf/\img\mathrm{CK}\big) & : & (5,\,3,\,2),\\
\rank M_{\tf}\ \big(R_{K^{\tf}}\bmod F_{12}+G_\zeta\big) & : & 2.
\end{array}
\]

Mixing in $\psi_\perp$ or in the gamma-trace $\chi$, and allowing $V(k)$ to depend
pseudodifferentially on the direction $\hat k$ (an APS/Calder\'on-type vector--spinor symbol),
does not remove the contradiction: the argument is pointwise in $k$, and \eqref{eq:joint} is
computed in the full tangential $\C^{12}_{\psi_a}$ space, including trace directions, while
$R_{K^{\tf}}$ itself has no normal component. Hence no half-dimensional $V(k)$ of any of these types
can contain $R_{K^{\tf}}$ modulo $F_{12}+G_\zeta$.

\subsection{The elliptic chiral completion at highest-derivative order}
\label{sec:completion}

Fixing the tangential gravitino data to be chiral still leaves its normal component and the
ghosts to be specified. The normal condition below contains $\dd_\perp\psi_\perp$. For a
bare first-order Dirac or Rarita--Schwinger operator this is not a standard local
self-adjoint boundary condition; for the general framework of mixed (chirality-split)
Dirichlet/Neumann boundary conditions for the Dirac operator see \cite{Luckock:1990xr}. Here it is used only in the local high-momentum sense
defined above: the mixed-order boundary symbol obtained after keeping the highest-derivative
terms. Equivalently, in PDE language, this is the Douglis--Nirenberg/Calderon symbol for the
squared Dirac-type normal component in the gauge-fixed complex. This section provides a
\emph{diagnostic} check: the natural chiral completion can satisfy the LS count and BRST
closure at highest-derivative order, so the later no-go is not caused by an elementary rank or
ghost mismatch. It is not an independent construction of a first-order self-adjoint
Rarita--Schwinger realization.

Concretely, the chiral completion used here at this order is
\begin{equation}
\label{eq:cand}
\Pm\psi_a\big|=0,\qquad \Pm\,\dd_\perp\psi_\perp\big|=0,\qquad \Pm\eta\big|=0,
\end{equation}
together with $B_{\mathrm{bos}}$ and Dirichlet ghosts $c_\mu|=0$. We claim it gives an
LS-elliptic boundary symbol (rank $24/24$) and is BRST-compatible at this order, and check
the two properties in turn. That a Dirichlet-type (chiral) gravitino boundary condition is
LS-elliptic is originally due to Massimo Porrati (private communication); the calculation
here makes this precise for the chiral/Robin completion in the gauge-fixed theory.

\paragraph{BRST closure.}
With $s$ denoting the BRST differential, $s\psi_\mu=\dd_\mu\eta$, and
with the ghost restricted to decaying modes,
\begin{equation}
\sd\eta=0,\qquad
\dd_\perp\eta=-\gamma^\perp\gamma^a\dd_a\eta ,
\end{equation}
the uniform chiral normal condition fails:
\begin{equation}
s(\Pm\psi_\perp|)=\Pm\dd_\perp\eta|=-\Pm\gamma^\perp\gamma^a\dd_a\eta|
=\Pm\gamma^a\dd_a\eta|=\gamma^a\dd_a(\Pp\eta)|\neq0,
\end{equation}
using $\Pm\gamma^\perp=-\Pm$, $\Pm\gamma^a=\gamma^a\Pp$, and $\Pp\eta|=\eta|$ (from
$\Pm\eta|=0$). The Robin condition closes, because on the same modes
\begin{equation}
\dd_\perp^2\eta
=\gamma^\perp\gamma^a\gamma^\perp\gamma^b\dd_a\dd_b\eta
=-\gamma^a\gamma^b\dd_a\dd_b\eta
=-\dd_\parallel^2\eta
=|k|^2\eta,\qquad
\dd_\parallel^2:=\delta^{ab}\dd_a\dd_b ,
\end{equation}
so
\begin{equation}
s(\Pm\dd_\perp\psi_\perp|)=\Pm\dd_\perp^2\eta|=|k|^2\,\Pm\eta|=0.
\end{equation}

\paragraph{Ellipticity.}
Again rotate $k$ to $e_1$ and take a decaying Laplace mode
$e^{iy^1-x^\perp}$. The conformal (bosonic) block below is exactly the elliptic Witten
calculation of Section~\ref{sec:bosonic-york}, now appearing as one of four block-diagonal
sectors; we use the same graviton amplitudes $H_{\mu\nu}$ (with $H_{\perp a}=q_a$,
$H_{\perp\perp}=\beta$, and common trace value $r$ as there). Write the ten boundary amplitudes
in the order
\[
(H_{\perp\perp},H_{\perp1},H_{\perp2},H_{\perp3},
H_{11},H_{22},H_{33},H_{12},H_{13},H_{23}).
\]
The five equations $H_{ab}^{\tf}=0$ give
$H_{11}=H_{22}=H_{33}=r$ and $H_{12}=H_{13}=H_{23}=0$. The two equations
$T_2=T_3=0$ give $H_{\perp2}=H_{\perp3}=0$. The remaining three equations reduce to
\begin{equation}
\label{eq:york-e1}
-2iH_{\perp1}-3r=0,\qquad
-\tfrac12 H_{\perp\perp}+iH_{\perp1}+\tfrac32r=0,\qquad
-\tfrac i2 H_{\perp\perp}-H_{\perp1}-\tfrac i2r=0,
\end{equation}
whose only solution is $r=H_{\perp1}=H_{\perp\perp}=0$. Thus the conformal block has
rank $10$ on the ten-dimensional decaying graviton space.

For the Dirac-type blocks the $+1$ eigenspace of
$A(e_1)=i\gamma^\perp\gamma_1$ is
\[
\{(u,-i\sigma_1u):u\in S_+\}\subset S_+\oplus S_-.
\]
Therefore $\Pm$ maps the decaying spinor space isomorphically to $S_-$, and
$\Pm\dd_\perp=-\Pm$ is also an isomorphism on decaying modes. The rank contributions are
\begin{equation}
\rank_{\psi}=6+2=8,\qquad
\rank_{\mathrm{ghost}}=4+2=6 .
\end{equation}
Here the three tangential conditions $\Pm\psi_a|=0$ contribute the $6$, the normal Robin
condition contributes the $2$, Dirichlet for the four Laplace ghosts $c_\mu$ contributes
$4$, and $\Pm\eta|=0$ contributes $2$ on the Dirac decaying ghost. In the local high-momentum
reduction, the boundary symbol is block diagonal at highest-derivative order (boson, diffeomorphism ghost, gravitino,
supersymmetry ghost), hence its rank is
\[
10+4+8+2=24
\]
on the $24$-dimensional decaying space. The Robin condition on $\psi_\perp$ is used here as
part of this mixed-order highest-derivative boundary problem (equivalently for the squared
Dirac-type normal component); by itself it is not being claimed to define a self-adjoint
first-order Rarita--Schwinger realization. This is enough for the diagnostic statement and
for the highest-derivative BRST check above.

\paragraph{Boundary pairing identity.}
The bare first-order Rarita--Schwinger boundary pairing contains,
with antisymmetrized Clifford products understood,
\[
\oint\bar\psi_a\gamma^{a\perp b}\psi_b .
\]
The Clifford identity
$(\gamma^{a\perp b}+\gamma^{ab})\Pp=0$ holds. For $a\neq b$,
\[
(\gamma^a\gamma^\perp\gamma^b+\gamma^a\gamma^b)\Pp
=\gamma^a(\gamma^\perp\gamma^b+\gamma^b)\Pp
=\gamma^a(-P_-^{\perp}\gamma^b+P_-^{\perp}\gamma^b)=0,
\]
using $\gamma^b\Pp=\Pm\gamma^b$ and $\gamma^\perp\Pm=-\Pm$; both sides vanish for
$a=b$. Hence on the chiral slice
\eqref{eq:cand} the Belyaev--van~Nieuwenhuizen-type boundary bilinear
$\tfrac12\oint\bar\psi_a\gamma^{ab}\psi_b$ has the right principal Clifford structure to
pair with the bulk boundary form (overall sign set by the bulk normalization). A complete
first-order variational and self-adjoint realization is not claimed here.

\subsection{The reverse direction and the broken-Lorentz obstruction}
\label{sec:reverse}

Now run supersymmetry the other way. Acting on the fermionic condition it must induce a
condition on the metric, and we ask whether that induced condition is compatible with the
conformal data. It is not, and the obstruction is the one piece of boundary geometry the
conformal prescription declines to fix, the traceless extrinsic curvature.

\paragraph{The obstruction is the normal--tangential spin connection.}
For the chiral completion \eqref{eq:cand}, $\delta_\varepsilon(\Pm\psi_a|)$ does not vanish on
$B_{\mathrm{bos}}$ data; its nonzero part is the boundary normal--tangential spin connection
$\omega_a{}^{b\perp}|=K_{ab}+\tfrac12 ik_a h_{b\perp}$, which $B_{\mathrm{bos}}$ leaves free.
To see this, split the gravitino variation into tangential--tangential and mixed pieces,
\begin{equation}
\delta_\varepsilon\psi_a
=\frac14\omega_a{}^{mn}\gamma_{mn}\varepsilon
=\frac14\omega_a{}^{bc}\gamma_{bc}\varepsilon
+\frac12\omega_a{}^{b\perp}\gamma_{b\perp}\varepsilon .
\end{equation}
Since $\gamma_{bc}$ (two tangential factors) commutes with $\gamma^\perp$,
$\gamma_{bc}\varepsilon\in\img\Pp$ for $\varepsilon\in\img\Pp$. The mixed factor has the
opposite chirality:
\begin{equation}
\gamma_{b\perp}=\gamma_b\gamma^\perp,\qquad
\gamma_{b\perp}\varepsilon
=\gamma_b\gamma^\perp\varepsilon
=\gamma_b\varepsilon\in\img\Pm .
\end{equation}
The chiral condition $\Pm\psi_a|=0$ keeps $\img\Pp$ and tests the complementary projection
$\Pm(\cdot)$, which annihilates the tangential--tangential part; hence
\begin{equation}
\label{eq:obs}
\delta_\varepsilon(\Pm\psi_a|)=\tfrac12\,\omega_a{}^{b\perp}|\,\gamma_b\varepsilon .
\end{equation}
The relevant linearized spin connection and extrinsic curvature are
\begin{equation}
\omega_\mu{}^{mn}
=\frac12(\dd^n h_\mu{}^m-\dd^m h_\mu{}^n),\qquad
K_{ab}
=\frac12(\dd_\perp h_{ab}-\dd_a h_{b\perp}-\dd_b h_{a\perp}) .
\end{equation}
Therefore
\begin{equation}
\omega_a{}^{b\perp}
=\frac12(\dd_\perp h_{ab}-\dd_b h_{a\perp})
=K_{ab}+\frac12\dd_a h_{b\perp},\qquad
\omega_a{}^{b\perp}|
=K_{ab}+\frac12ik_a h_{b\perp}| .
\end{equation}
Thus $B_{\mathrm{bos}}$ fixes $h_{ab}^{\tf}|$, $K|$ and $T_\mu|$, none of which constrains
$\dd_\perp h_{ab}^{\tf}|$ (i.e.\ $K_{ab}^{\tf}$) or $h_{a\perp}|$.

Moreover \eqref{eq:obs} vanishes for all preserved $\varepsilon$ if and only if
$\omega_a{}^{b\perp}|=0$. Indeed, in the chiral basis used above,
$\omega_a{}^{b\perp}\gamma_b\varepsilon$ is the Pauli matrix
$\big(\sum_b\omega_a{}^{b\perp}\sigma_b\big)\varepsilon$ acting from $S_+$ to $S_-$.
The matrices $\sigma_1,\sigma_2,\sigma_3$ are linearly independent, so this map is zero for
all $\varepsilon\in S_+$ exactly when every coefficient $\omega_a{}^{b\perp}$ is zero.
The conformal rows do not imply this. For $k=e_1$, the boundary jet
\begin{equation}
h|=0,\qquad
\dd_\perp h_{22}=1,\qquad
\dd_\perp h_{33}=-1 ,
\end{equation}
with all other normal derivatives zero, satisfies \eqref{eq:york} but has nonzero
$K^{\tf}$. Also the boundary value
\begin{equation}
h_{2\perp}|=1,\qquad
\omega_1{}^{2\perp}=\frac i2 ,
\end{equation}
with all other $h|$ and normal derivatives zero, satisfies \eqref{eq:york}. Thus the raw obstruction is precisely the
unfixed normal--tangential connection, with a gauge-invariant $K^{\tf}$ part and a shift
part.

\paragraph{The gauge-invariant core is $\Stf$.}
This raw obstruction splits into the $K_{ab}^{\tf}$ image ($5$-dimensional, gauge-invariant)
and the $h_{a\perp}$-dependent spinor image ($3$-dimensional, lying in the local-supersymmetry
gauge image), and the quotient by the local-supersymmetry
gauge is canonically the traceless extrinsic curvature $K_{ab}^{\tf}\in\Stf$. Indeed, by
\eqref{eq:obs} the obstruction is
\begin{equation}
\frac12\omega_a{}^{b\perp}|\gamma_b\varepsilon,\qquad
\omega_a{}^{b\perp}|=K_{ab}+\frac12ik_a h_{b\perp}| .
\end{equation}
The $h_{a\perp}$ contribution is of the local-supersymmetry gauge form,
\begin{equation}
\frac14ik_a(h_{b\perp}\gamma_b\varepsilon)
=ik_a\zeta,\qquad
\zeta=\frac14h_{b\perp}\gamma_b\varepsilon ,
\end{equation}
hence lies in $G_\zeta$ and is removable.

It remains to see that the traceless $K$ contribution injects into the quotient. By
$SO(3)$ covariance take $k=e_1$. If
\[
K_{ab}^{\tf}\gamma_b\varepsilon=ik_a\zeta(\varepsilon)
\]
for all preserved $\varepsilon$, then the $a=2,3$ components of the left side must vanish:
\[
K_{2b}^{\tf}\sigma_b\varepsilon=0,\qquad K_{3b}^{\tf}\sigma_b\varepsilon=0
\qquad(\varepsilon\in S_+).
\]
Since $\sigma_1,\sigma_2,\sigma_3$ are linearly independent as maps $S_+\to S_-$, this gives
$K_{2b}^{\tf}=K_{3b}^{\tf}=0$ for all $b$. Symmetry then gives
$K_{12}^{\tf}=K_{13}^{\tf}=K_{23}^{\tf}=0$ and $K_{22}^{\tf}=K_{33}^{\tf}=0$; tracelessness
forces $K_{11}^{\tf}=0$. Thus the kernel is zero. The trace $\delta^{ab}K_{ab}=K$ is fixed
by $B_{\mathrm{bos}}$ (it is $\tfrac12\mathcal K$, fixed by the row $\mathcal K|=0$), and the
surviving quotient is represented by $K_{ab}^{\tf}$:
\begin{equation}
\text{obstruction}/G_\zeta\cong\Stf .
\end{equation}

\subsection{Main theorem}

\paragraph{Scope of the theorem.}
The theorem is a pointwise statement about the local high-momentum, highest-derivative limit
of the complexified, linearized, gauge-fixed theory on flat half-space. Equivalently, it is a
statement about the principal boundary symbol defined in Section~\ref{sec:setup}. It quantifies over
all choices $V(k)$ of which half of the boundary vector--spinor data is left free at each
$k\neq0$. It does
not assert a global elliptic/Fredholm theorem, a nonlinear no-go theorem, a self-adjoint first-order
Rarita--Schwinger realization (the Robin condition on $\psi_\perp$ of Section~\ref{sec:completion} is a
mixed highest-derivative completion only), or a complete quantum BRST boundary complex.

\begin{theorem}
\label{thm:main}
For the linearized gauge-fixed Euclidean $\mathcal N=1$ supergravity model on $\R^4_+$ with
fixed conformal bosonic boundary data \eqref{eq:york}, there is no completion $V(k)$ that
leaves half of the gravitino boundary data free (local or pseudodifferential,
APS-type included, and with any accompanying ghost boundary condition compatible with BRST at
this highest-derivative level) that preserves the full rigid half-supersymmetry
$\Pp\varepsilon=\varepsilon$. The obstruction is the traceless extrinsic curvature
$K_{ab}^{\tf}\in\Stf$, the canonical momentum conjugate to the conformal class $[h_{ab}]$,
which \eqref{eq:york} leaves as free Neumann/response data while the half-supersymmetry ties
it to the Dirichlet-type chiral datum $\Pm\psi_a|$.
\end{theorem}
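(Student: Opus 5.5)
The proof assembles the pointwise linear-algebra facts established above into a contradiction at each fixed $k\neq0$. By boundary $SO(3)$ covariance we set $\hat k=e_1$ and work in the boundary-chiral basis of Section~\ref{sec:forward}. Suppose a completion exists: a half-dimensional $V(k)\subset E_\psi$, the kernel of some rank-$8$ symbol $b(k)$, possibly pseudodifferential or mixing $\psi_\perp$ with the $\psi_a$, together with any highest-derivative BRST-compatible ghost condition, such that the full surface $B_{\mathrm{bos}}\cup\{b(k)\psi|=0\}$ is $\delta_\varepsilon$-invariant modulo compensating gauge for every $\varepsilon\in\img\Pp$. The first step reformulates invariance as a condition on $V$ alone. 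For the fermionic rows, $b(k)\delta_\varepsilon\psi\in b(k)G$ is equivalent to $\delta_\varepsilon\psi\in V+G$. Since the ghost condition only shrinks the admissible compensator, we may enlarge $G$ to the full local-supersymmetry image $G_\zeta$. Any obstruction that survives this enlarged quotient therefore survives every BRST-compatible choice, which disposes of the clause ``with any accompanying ghost boundary condition''.

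The second step is forward closure. Only the necessary row $h^{\tf}_{ab}|=0$ is used. Its variation depends only on the tangential projection of the gravitino, and closure modulo boundary diffeomorphisms gives \eqref{eq:fwd}: $V_{\mathrm{tang}}\subseteq F_{12}$, where $\dim F_{12}=10$. This dimension comes from the rank-$2$ TT residual map \eqref{eq:ttres}, after quotienting by the rank-$3$ conformal-Killing image. The third step is reverse closure. Here we use that $B_{\mathrm{bos}}$ admits boundary jets with arbitrary $K^{\tf}_{ab}$ (the explicit jet of Section~\ref{sec:reverse}). The rigid variation of the fermionic data evaluated on such jets therefore produces the image $R_{K^{\tf}}=\{K^{\tf}_{ab}\sigma_b\varepsilon\}$, which must lie in $V+G_\zeta$. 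The shift contribution $\tfrac12 ik_a h_{b\perp}$ is removed into $G_\zeta$ exactly as in the computation following \eqref{eq:obs}.

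\textbf{Main obstacle.} For a completion that is not the chiral one, the reverse variation of $b(k)\psi$ is not literally \eqref{eq:obs}. We will handle this by arguing at the level of subspaces. The bosonic data allowed by $B_{\mathrm{bos}}$ include all of $\Stf$ in $K^{\tf}$. The variation $\delta_\varepsilon\psi_a$ contains $\tfrac12\omega_a{}^{b\perp}\gamma_b\varepsilon$ regardless of $b(k)$, and its tangential--tangential part lies in $S_+$-valued data, where we will check it already lies in $F_{12}$. So invariance of $V$ modulo gauge forces $R_{K^{\tf}}\subseteq V+G_\zeta$. Because $R_{K^{\tf}}$ and $G_\zeta$ are purely tangential, projecting gives $R_{K^{\tf}}\subseteq V_{\mathrm{tang}}+G_\zeta\subseteq F_{12}+G_\zeta$, which is \eqref{eq:joint}. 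Mixing with $\psi_\perp$ or $\chi$ and $\hat k$-dependence are invisible to this projection.

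The final step evaluates \eqref{eq:joint}. The TT residual map annihilates $F_{12}$ by definition and annihilates $G_\zeta$ because $G_\zeta$ only moves $\phi_1$. Composing it with $R_{K^{\tf}}$ stacked over the basis $u_1,u_2$ of $S_+$ gives $M_{\tf}$ in \eqref{eq:rk-matrix}, and $\rank M_{\tf}=2\neq0$ contradicts \eqref{eq:joint}. Since every $k\neq0$ rotates to $e_1$, the obstruction holds pointwise. Combined with the injectivity statement $\text{obstruction}/G_\zeta\cong\Stf$, this identifies the obstruction with $K^{\tf}_{ab}$, the datum that \eqref{eq:york} leaves free as a Neumann response, which completes the theorem.
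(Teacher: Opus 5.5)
Your proposal is correct and follows the paper's proof essentially step for step: you enlarge the compensator to the full image $G_\zeta$, use forward closure of $h^{\tf}_{ab}|=0$ modulo $\img\mathrm{CK}$ to get $V_{\mathrm{tang}}\subseteq F_{12}$, use the pure-$K^{\tf}$ boundary jets to get $R_{K^{\tf}}\subseteq V+G_\zeta$, project tangentially to reach \eqref{eq:joint}, and contradict it with $\rank M_{\tf}=2$. One wording fix: the fact that the $S_+$-valued tangential--tangential term lies in $F_{12}$ does not by itself give $R_{K^{\tf}}\subseteq V+G_\zeta$; it gives only the projected inclusion into $F_{12}+G_\zeta$, which is all you need, and on the $h|=0$ jets that term vanishes anyway, so the direct inclusion also holds.
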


Here $V(k)$ means a half-dimensional subspace of the full boundary vector--spinor trace space
$\C^{16}_{(\psi_\perp,\psi_a)}$ at each nonzero tangential momentum. It may depend on
$\hat k$, may mix $\psi_\perp$ with the three $\psi_a$, and need not be componentwise in the
vector index; the proof is pointwise in $k$, so no smoothness of the projector is used.
Thus a genuine vector--spinor APS or Calderon symbol, if imposed pointwise in tangential
momentum, is included as one such choice of $V(k)$. The freedom to choose ghost boundary
conditions cannot weaken the obstruction: in the proof the compensating local-supersymmetry
variation is allowed to be as large as possible at highest-derivative order.

\begin{proof}
The contradiction is completion-independent, as established in Section~\ref{sec:forward}. Let $V$ be
the kept boundary datum of any half-dimensional completion. Forward closure of $h_{ab}^{\tf}|=0$,
modulo a compensating boundary diffeomorphism, requires
$\mathrm{proj}_{\mathrm{tang}}(V)\subseteq F_{12}$ as in Section~\ref{sec:forward}. Reverse closure of
the fermionic condition, even modulo the enlarged full local-supersymmetry gauge $G_\zeta$,
requires the gauge-invariant obstruction $R_{K^{\tf}}\subseteq V+G_\zeta$; by the
spin-connection analysis of Section~\ref{sec:reverse} this obstruction is the traceless extrinsic
curvature $K_{ab}^{\tf}\in\Stf$, which is purely tangential, so
$R_{K^{\tf}}\subseteq F_{12}+G_\zeta$. But $R_{K^{\tf}}\not\subseteq F_{12}+G_\zeta$
(Section~\ref{sec:forward}). Hence no half-dimensional completion, local or pseudodifferential, closes both
directions; equivalently the conformal class $[h_{ab}]$ and its conjugate momentum
$K_{ab}^{\tf}$ cannot both be fixed, which is the asserted Dirichlet--Neumann obstruction.

The completion of Section~\ref{sec:completion} enters not as a premise but as a diagnostic check:
the natural chiral/Robin completion \eqref{eq:cand} is LS-elliptic ($24/24$) and
BRST-compatible at highest-derivative order, so the failure of half-supersymmetry is not due
to ellipticity or BRST compatibility at that order. It is the $K_{ab}^{\tf}$
reverse-supersymmetry obstruction described above.
\end{proof}

\section{Comments on Other Dimensions}
\label{sec:otherd}

The theorem above is four-dimensional. This section records what survives of the same
local high-momentum mechanism in other dimensions. Its purpose is diagnostic rather than
classificatory: the four-dimensional theorem is the only no-go result proved here, while the
lower- and higher-dimensional cases help identify which part of the obstruction is tied to
the boundary tensor algebra and which part depends on the particular minimal supergravity
multiplet.

The bosonic part of the local high-momentum count is uniform. In the pure metric/gravitino channel, for boundary
dimension $d-1\ge2$ (with $d$ the bulk dimension), the reverse supersymmetry variation produces the traceless
extrinsic curvature $K_{ab}^{\tf}\in\Stf$ as its gauge-invariant datum; quotienting in addition
by the linearized boundary conformal-Killing map $\xi_a\mapsto(\dd_{(a}\xi_{b)})^{\tf}$ leaves the
residual
\begin{equation}
\label{eq:dcount}
\dim\big(\Stf/\img\,\mathrm{CK}\big)=\frac{d(d-3)}{2}\qquad(d\ge3),
\end{equation}
the number of massless spin-two polarizations in $d$ dimensions; the case $d=2$ is exceptional,
with $\Stf=0$. Thus $d=4$ is the lowest dimension in which the residual is nonzero. It is also,
among the cases considered here, the unique nondegenerate one in which the minimal supergravity
multiplet is exactly $(g_{\mu\nu},\psi_\mu)$ (the graviton and gravitino on-shell degrees of
freedom matching, $2=2$) so that no extra field enters the boundary analysis. Below four
dimensions the residual vanishes; above it the residual grows but the minimal multiplet must
carry additional fields, which re-enter the reverse variation. Only the bosonic count
\eqref{eq:dcount} transcribes uniformly; the spinorial and extra-field parts must be rechecked
dimension by dimension. The comments below are therefore at the level of this local
highest-derivative boundary analysis; they are remarks, not theorems, and none has the status of Theorem~\ref{thm:main}. They
indicate where the question is open and are supported by the dimension-check ancillary scripts
listed in the Reproducibility note.

\subsection{Two dimensions}
On a one-dimensional boundary $\Stf=0$: there is no traceless extrinsic curvature and the
obstruction is vacuous (pure two-dimensional Einstein gravity is in any case topological). The
physically substantive two-dimensional theory is dilaton gravity. In Jackiw--Teitelboim gravity
\cite{Jackiw:1984je,Teitelboim:1983ux}
the dilaton fixes the bulk geometry to constant curvature, the boundary condition fixes the
induced length and the boundary dilaton rather than conformal boundary data, and the dynamical
boundary datum is the mean curvature, whose on-shell action is the Schwarzian
\cite{Maldacena:2016upp}. Its supersymmetric version preserves a boundary supersymmetry through
the super-Schwarzian \cite{Fu:2016vas,Stanford:2019vob}, but this proceeds through the dilaton
multiplet and at the near-boundary (asymptotically AdS$_2$) level; it is a different
boundary-value problem, orthogonal to the finite-boundary $K_{ab}^{\tf}$ mechanism studied here.
Conformal boundary conditions do reach two dimensions by a separate route: dimensionally reduced,
they descend to a JT-type dilaton gravity in which the trace $K$ survives as the dynamical
(Schwarzian) datum \cite{Banihashemi:2025qqi}, while the traceless $K_{ab}^{\tf}$ on which our
obstruction lives has no one-dimensional analogue, consistent with its vacuousness here.

\subsection{Three dimensions}
On a two-dimensional boundary $\dim\Stf=2$ but \eqref{eq:dcount} gives $0$: the obstruction
vanishes although the tensor $K_{ab}^{\tf}$ does not. The conformal-Killing map
$\xi_a\mapsto(\dd_{(a}\xi_{b)})^{\tf}$ is surjective onto $\Stf=S^2_0(\R^2)$ at every $k\neq0$
(two-dimensional metrics are conformally flat), so the $K_{ab}^{\tf}$ residue that obstructs four
dimensions is pure boundary-diffeomorphism gauge here. In the local high-momentum test, the
three-dimensional analogue of the chiral/Robin completion of Section~\ref{sec:completion}
is LS-elliptic (the conformal boson rows have full rank, as does the chiral-plus-Robin fermion
block), its diffeomorphism- and local-supersymmetry-ghost conditions are consistent at
highest-derivative order, and the reverse variation \eqref{eq:obs} closes modulo gauge. A local
half-supersymmetric conformal boundary condition therefore passes, in three dimensions, the same
pointwise high-momentum tests that four dimensions fails; this is not a global existence theorem, but
reflects that the transverse-traceless residue of $K_{ab}^{\tf}$ is pure boundary-diffeomorphism
gauge on a two-dimensional boundary while it survives on a three-dimensional one.
Two caveats. Three-dimensional gravity has no local propagating modes \cite{Witten:1988hc}, so this elliptic statement
is formal, the genuine content being global (boundary super-Virasoro data, or the Chern--Simons
formulation of three-dimensional supergravity \cite{Achucarro:1986uwr}) and outside this local
high-momentum test. And this is the flat
half-space $\R^3_+$ with $\Lambda=0$, distinct from the asymptotically AdS$_3$ problem.

\subsection{More than four dimensions}
Above four dimensions the residual grows as \eqref{eq:dcount}, but the minimal supergravity
multiplet is no longer $(g,\psi)$, and the extra fields re-enter the reverse variation. The
pattern is a chirality selection rule. Here ``channel'' means the normal-chirality sector of
the boundary vector--spinor that a given bosonic field strength can reach in
$\delta_\varepsilon\psi_a$. In five dimensions the gravity multiplet is
$(g_{\mu\nu},\psi_\mu^i,A_\mu)$, with $A_\mu$ the graviphoton and $i$ the $SU(2)$ index
\cite{Zucker:1999ej}; its
field strength $F_{\mu\nu}=\dd_\mu A_\nu-\dd_\nu A_\mu$ enters $\delta_\varepsilon\psi_a$,
and its tangential (magnetic) part $F_{ab}$ reaches the same chiral vector--spinor channel as
$K_{ab}^{\tf}$, so the pure-$(g,\psi)$ count is incomplete. Under the standard
ensemble (conformal metric data together with a Dirichlet condition on $A_a$) the
magnetic datum that reaches this channel is fixed source data, while the free Neumann datum is the electric
$F_{a\perp}$, which has the opposite boundary chirality and does not reach the channel. This
chirality-channel test in the standard Dirichlet-$A_a$ ensemble finds no cancellation by the free electric
datum, so the four-dimensional obstruction appears to persist, with bosonic residual
$d(d-3)/2=5$; a theorem would require the full Maxwell--gravitino highest-derivative boundary problem, including the
variation of the $A_a$ datum. A possible evasion indicated by the test is a non-standard
boundary relation tying $K_{ab}^{\tf}$ to the self-dual magnetic curvature, a full-multiplet
supersymmetric conformal boundary condition whose ellipticity we do not analyze. The channel statements
are checked numerically by the five-dimensional ancillary script listed in the Reproducibility note.
Only this five-dimensional selection rule has been checked concretely; what follows is a
structural guide, not a theorem. The Clifford-algebra
heuristic suggested by the five-dimensional check is that the reach depends on form degree: for a
$p$-form potential the electric and magnetic boundary components are expected to occupy opposite
chirality channels, so the standard Dirichlet datum fixes the reaching channel for even
field-strength degree (the five-dimensional graviphoton) but not necessarily for odd degree, and
self-dual or Chern--Simons-constrained tensors (such as the self-dual two-form of
six-dimensional minimal supergravity \cite{Nishino:1984gk}) replace the Dirichlet/Neumann split by a chiral
polarization and must be treated case by case. We do not resolve these. The stable conclusion of
these high-momentum checks is that four dimensions is the unique nondegenerate dimension in which the conformal-boundary
obstruction is both nonzero and unconditional: extra multiplet fields are absent there, present
and potentially curative above, and the obstruction itself absent below.

\section{Discussion}
\label{sec:disc}

In this work we have shown that Witten's elliptic conformal boundary problem for
perturbative Euclidean gravity admits no separated linear half-supersymmetric completion in
minimal supergravity, and identified the obstruction as the traceless extrinsic curvature
$K_{ab}^{\tf}$. The mechanism is a Dirichlet--Neumann mismatch. The conformal boundary
condition fixes the conformal class $[h_{ab}]$ (Dirichlet on the conformal metric) and the
mean curvature $K$ (Dirichlet on the conformal factor's conjugate), leaving $K_{ab}^{\tf}$ as
the free Neumann response, the canonical momentum conjugate to $[h_{ab}]$. The preserved
half-supersymmetry maps the Dirichlet-type chiral gravitino datum $\Pm\psi_a|$ into a
variation of precisely $K_{ab}^{\tf}$. Since a field and its conjugate momentum cannot both be
fixed, supersymmetry and the conformal boundary choice impose incompatible (Dirichlet versus
Neumann) requirements on the same boundary tensor. This is the content of
Theorem~\ref{thm:main}, and the gauge-invariant-core analysis of Section~\ref{sec:reverse} makes the
conjugate datum explicit as an element of $\Stf$.

Witten's point was that a finite-boundary
Euclidean gravitational path integral requires an elliptic boundary-value problem before the
Gaussian fluctuation integral, one-loop determinant, or heat-kernel expansion can be trusted.
For pure gravity, the conformal boundary condition provides such a boundary problem. The theorem
shows that the same perturbative framework has no separated linear half-supersymmetric completion in
minimal supergravity while $K_{ab}^{\tf}$ is kept as the free conformal response. This is not
only the failure of a particular elliptic ansatz: the chiral/Robin completion of
Section~\ref{sec:completion} is elliptic and BRST-compatible at highest-derivative order, but
supersymmetry still maps it to the unfixed $K_{ab}^{\tf}$. Thus a supersymmetric
finite-boundary perturbative path integral, if it exists, must modify the strict conformal
ensemble (for example by imposing a supercovariant graph tying $K_{ab}^{\tf}$ to boundary
fermions) rather than merely appending an independent gravitino boundary condition to
Witten's bosonic data.

\paragraph{Supercurrent multiplet and the nonlinear resolution.} The pair
$(K_{ab}^{\tf},\,\Pm\psi_a)$ is the bottom of a boundary supermultiplet, the
extrinsic-curvature/supercurrent multiplet of Belyaev--van~Nieuwenhuizen and Belyaev--Pugh
\cite{Belyaev:2005rt,Belyaev:2007bg,Belyaev:2010as}. Beyond linear order the spin connection is promoted to
the supercovariant $\what{\omega}_a{}^{b\perp}=\omega_a{}^{b\perp}+\psi\psi$, and the boundary
action ($\oint K$ for gravity, $\oint\bar\psi\gamma\psi$ for the gravitino) ties $K_{ab}^{\tf}$
to a gravitino bilinear. The obstruction found here points to precisely the sort of
$O(\psi^2)$ supercovariant term that the linearized analysis cannot supply: at linear order
$K_{ab}^{\tf}$ is free, while the nonlinear boundary calculus is designed to close the
multiplet. This suggests that, within the local high-momentum boundary analysis, a
supersymmetric refinement of conformal data would have to evade the obstruction through
nonlinear or fermion-bilinear-twisted boundary data, not through a no-go for supergravity with
boundary in general. In modern $4d$ $\mathcal N=1$ terms the
pair $(K_{ab}^{\tf},\Pm\psi_a)$ is a boundary avatar of the bulk supercurrent multiplet
\cite{Komargodski:2010rb}, realized concretely as the boundary extrinsic-curvature multiplet
of Belyaev--Pugh \cite{Belyaev:2010as}; more generally, in a theory with boundary it is the
total-derivative term in the bulk supersymmetry variation that obstructs invariance, removable
only by a boundary action and only for a sub-superalgebra \cite{DiPietro:2015zia,Luckock:1991se}. Our
gravitational statement is the precise version of this: the bosonic partner that the surviving
supercharges would pair with the gravitino datum $\Pm\psi_a|$ is exactly the response
$K_{ab}^{\tf}$ that the conformal prescription must leave free.

\paragraph{Relation to standard half-BPS boundaries.} This obstruction should not be confused
with the familiar half-supersymmetric branes and walls of supergravity, where the boundary
equations include a condition on the normal--tangential
connection. A totally geodesic or orbifold wall, as in the Ho\v{r}ava--Witten construction
\cite{Horava:1996ma}, sets $K_{ab}|=0$ up to supercovariant corrections; a pure-tension
end-of-the-world brane \cite{Takayanagi:2011zk} obeys an Israel equation \cite{Israel:1966rt}
$K_{ab}\propto h_{ab}$,
so $K_{ab}^{\tf}|=0$; with boundary matter an Israel equation relates $K_{ab}^{\tf}|$ to the boundary stress
tensor, so it is not left as the independent conformal-boundary response. In the boundary supergravity of Belyaev--van~Nieuwenhuizen
\cite{Belyaev:2005rt,Belyaev:2007bg} the same statement is supersymmetrized: the controlled datum
is the supercovariant $\what{\omega}_a{}^{b\perp}$, an extrinsic-curvature multiplet whose bosonic
component is tied to boundary fermion bilinears. None of these is the conformal ensemble considered
here, in which the conformal class is fixed while $K_{ab}^{\tf}$ is left as a free Neumann
response. Equation~\eqref{eq:obs} is exactly the linearized point where the distinction appears:
half-supersymmetry closes when $\what{\omega}_a{}^{b\perp}|$ may be constrained, and the
conformal prescription considered here forbids imposing an independent constraint on $K_{ab}^{\tf}$. The
asymptotically anti-de~Sitter supersymmetric Dirichlet problem
\cite{Festuccia:2011ws,BenettiGenolini:2016tsn} likewise fixes a conformal class
and reports a stress-tensor response, but it is a renormalized source/response setting at
conformal infinity, distinct from the finite-boundary local high-momentum analysis treated here.

\paragraph{Relation to the spectral-versus-local debate.} D'Eath and Esposito \cite{DEath:1991opx} concluded that gravitino boundary conditions in one-loop
Euclidean supergravity should be spectral (APS); see \cite{Vassilevich:2003xt} for the heat-kernel framework and
\cite{BaerBallmann:2012} for Dirac boundary value problems. Theorem~\ref{thm:main} sharpens and partly
redirects this in the local high-momentum boundary test: the usual spectral/APS symbol is naturally self-adjoint for Dirac-type problems but \emph{breaks} the chirality
($\Pp\gamma_a\Pi_+\neq0$) that the conformal-class supersymmetry variation requires
(Section~\ref{sec:forward}); the natural local datum is the chiral projector. Any
self-adjoint or variational completion of that chiral datum would have to be supplied by
additional boundary terms or supercovariant boundary data, not by replacing it with the
ordinary APS symbol. The genuine obstruction to a
fully supersymmetric \emph{linearized} boundary condition compatible with the conformal data is not
self-adjointness but the Dirichlet--Neumann mismatch on $K_{ab}^{\tf}$. We do not analyze
global APS issues such as zero modes, eta-invariants \cite{Atiyah:1976jg,Atiyah:1976qjr}, or the full nonlocal Fredholm problem.

\paragraph{An active bosonic program.} The bosonic side of this problem has become an active
subject. Witten's ellipticity observation has been developed into a study of well-posed
one-parameter families of conformal boundary conditions and the subtle difference between
their Euclidean and Lorentzian stability \cite{Liu:2024ymn,Liu:2025xij}, of finite-boundary
``gravitational observatories'' and their boundary dynamics
\cite{Anninos:2023epi,Anninos:2024wpy,Anninos:2024xhc}, of the
associated boundary thermal effective actions \cite{Banihashemi:2025qqi}, and of their
extension to higher-curvature gravity \cite{Galante:2025emz}. In this setting
our result has a direct interpretation: the bosonic conformal boundary condition can
be well posed and thermodynamically rich, but any supersymmetrization of it must address the
$K_{ab}^{\tf}$ response that the same prescription leaves free. The conflict is structural,
not a defect of a particular fermionic completion.

\paragraph{Relation to the gravitational path integral and holography.} Like Witten's note
\cite{Witten:2018lgb}, this is a finite-boundary local high-momentum result, motivated by the
semiclassical gravitational path integral (with the boundary term and the microcanonical
sense in which fixed boundary geometry defines an ensemble being the standard ones
\cite{Gibbons:1976ue,Brown:1992bq,Hawking:1995fd}) rather than by holography. Still, the
conformal datum \eqref{eq:york} is analogous to what one fixes at the conformal boundary of an
asymptotically anti-de~Sitter space, with $K_{ab}^{\tf}$ playing the role of the holographic
stress tensor \cite{Brown:1992br,Balasubramanian:1999re,deHaro:2000vlm}; in that light the
obstruction is the flat-space version of the statement that a supersymmetric conformal boundary
is naturally a boundary supercurrent multiplet, not a metric condition supersymmetrized
component by component. The finite-boundary setting makes this constraint local and explicit.
In AdS/BCFT the boundary geometry and its
extrinsic-curvature response are dynamical data carried by an end-of-the-world brane
\cite{Takayanagi:2011zk}, and imposing a conformal rather than Dirichlet condition there involves
precisely the traceless-extrinsic-curvature / brane-bending modes \cite{Chu:2021mvq}; the same
finite-boundary geometric data underlies wedge holography \cite{Akal:2020wfl} and
finite-cutoff / $T\bar T$-type holography \cite{McGough:2016lol,Hartman:2018tkw}. In every such setting a
half-supersymmetric refinement would have to specify how the $K_{ab}^{\tf}$ response enters the
boundary supermultiplet, which is exactly the conflict identified here. Whether the
supersymmetric conformal-boundary path integral is well posed once the nonlinear boundary
multiplet is restored, and the extension to asymptotically anti-de~Sitter boundaries
($\Lambda<0$, with the corresponding Killing spinors \cite{Festuccia:2011ws} and supersymmetric
counterterms \cite{BenettiGenolini:2016tsn}), are left to future work.

\paragraph{Outlook: the super-York boundary-value problem.} The obstruction is not a failure of
ellipticity or BRST compatibility at highest-derivative order (the chiral/Robin completion of
Section~\ref{sec:completion} satisfies both in the mixed-order high-momentum test explained there) but the
Dirichlet--Neumann mismatch for $K_{ab}^{\tf}$. It identifies, rather than forecloses, the
super-York problem: the surviving supercharges pair the gravitino datum with $K_{ab}^{\tf}$, so the
theorem points to a boundary relation, or graph in field space, tying $K_{ab}^{\tf}$ to a boundary supercurrent multiplet (rather
than a metric condition supersymmetrized component by component) as the form a supersymmetric
conformal boundary condition would have to take to absorb the obstruction.
This sets a concrete program. Its most direct realization is nonlinear: promote
$\omega_a{}^{b\perp}$ to the supercovariant $\what{\omega}_a{}^{b\perp}$ above and build this
boundary relation
from the boundary action of Belyaev--van~Nieuwenhuizen and Belyaev--Pugh, then establish its
LS-ellipticity and full BRST invariance. The dimensional pattern of Section~\ref{sec:otherd}
shows which parts of the obstruction are special to four dimensions. A further direction is
the same problem at an asymptotically anti-de~Sitter boundary, where the same boundary
relation should organize supersymmetric holographic renormalization
\cite{Festuccia:2011ws,BenettiGenolini:2016tsn}. In each case, the pointwise high-momentum
theorem proved here is the linearized obstruction any construction must absorb, a constraint
any super-York boundary condition must satisfy.

\paragraph{Reproducibility.}
The three rank-sensitive claims are analytic in the text: the completion-independence
obstruction is the rank-two matrix \eqref{eq:rk-matrix}, the LS symbol rank $24/24$ is the block
calculation following \eqref{eq:york-e1}, and the jet-space obstruction is the Pauli
linear-independence argument after \eqref{eq:obs}. The ancillary Python/NumPy scripts reproduce
these finite-dimensional reductions numerically and representation-independently. Running
\texttt{python3 verify\_all.py} from the ancillary directory checks the four-dimensional
assertions over $55$ tangential momenta (three coordinate axes, two generic directions, $50$
random) and under a random unitary change of gamma representation, printing
\texttt{all assertions passed} on success.

The component scripts are as follows. The forward chirality/APS test of
Section~\ref{sec:forward} is in \path{forward_chirality_aps.py}. The shared four-dimensional
gamma-matrix, York-symbol, and chiral/Robin-completion routines used in
Section~\ref{sec:completion} are in \path{four_d_symbol_algebra.py}. The
completion-independence maps and checks around \eqref{eq:rk-matrix} are in
\path{completion_independence_maps.py}, \path{completion_independence_backbone.py}, and
\path{completion_independence_mod_gauge.py}. The spin-connection obstruction
\eqref{eq:obs} is checked in \path{spin_connection_obstruction.py}. The quotient identifying
the gauge-invariant core with $\Stf$ is checked in \path{reverse_obstruction_gauge_test.py}
and \path{ktf_quotient_rank.py}. The dimension comments of Section~\ref{sec:otherd} are
supported by \path{dimension_counts_d2_d3.py}, \path{dimension_three_completion.py}, and
\path{dimension_five_channel.py}. The scripts are reproducibility checks and guards against
sign or gamma-convention errors, not substitutes for the finite-dimensional proofs above.

\addsec{Acknowledgements}
The author is grateful to Massimo Porrati for bringing Witten's paper to the author's
attention in 2022 and for showing that the Dirichlet boundary condition for the gravitino is
elliptic; this inspired the present work, and the author thanks him for valuable discussions.
The author also thanks Pinak Banerjee, Edgar Shaghoulian, and Edward Witten for
helpful discussions. The numerical verification scripts listed in the Reproducibility note
and part of the manuscript preparation were carried out with the assistance of AI tools,
Anthropic's Claude Code and OpenAI's Codex; the analytic proofs and all conclusions are the
author's own, and the author has checked the scripts and is responsible for any errors.
The author thanks Anthropic for providing access to the Claude Max plan. This work is
partially supported by the NSF grant PHY-2310588.

\printbibliography[heading=bibliography]

\end{document}